\documentclass{article}
\usepackage[utf8]{inputenc}
\usepackage{amsmath}
\usepackage{amsthm}
\usepackage{amssymb}
\usepackage{mathrsfs}
\usepackage{bbm}
\usepackage{hyperref}
\usepackage[authoryear]{natbib}
\usepackage[margin=1in]{geometry}
\usepackage{setspace}
\usepackage{xcolor}
\usepackage{ulem}
\usepackage{authblk}
\usepackage{fancyhdr} 
\usepackage{graphicx}
\usepackage{tikz}
\usepackage{comment}
\usepackage{csquotes}
\usetikzlibrary{patterns}
\usepackage{caption}
\usepackage{subcaption}
\newcommand{\minus}{\scalebox{0.75}[1.0]{$-$}}

\newtheorem{proposition}{Proposition}
 
\title{General Equilibrium Effects of Carbon Offsets}
\author[1]{Isla Globus-Harris}
\author[2]{Daniel H. Karney}
\affil[1]{Colgate University\footnote{iglobusharris@colgate.edu, 13 Oak Drive, Hamilton, NY 13346, USA.}}
\affil[2]{Ohio University\footnote{karney@ohio.edu, 1 President St., Athens, OH 45701, USA.}}

\date{\today}

\begin{document}

\pagestyle{fancy}
\fancyhf{}
\fancyhead[R]{\thepage}
\fancyhead[C]{General Equilibrium Effects of Carbon Offsets}

\maketitle
\thispagestyle{empty}


\begin{abstract}
\noindent We construct an analytical general equilibrium model of an economy with carbon offsets, and show that increasing the carbon offset price has an ambiguous effect on aggregate emissions and welfare. Using two carbon accounting metrics, we demonstrate that offsets are over-credited under many parameterizations; however, offset under-crediting can also occur. Due to general equilibrium effects, neither carbon accounting metric is a sufficient statistic for welfare. Furthermore, we define four margins whereby offsets can respond to payments, including a margin not previously identified. Our results suggest that market spillover effects warrant consideration when evaluating carbon offset policies.
\end{abstract}

JEL Codes: H23, D58, D62

Keywords: Offsets; Additionality; Subsidies; General Equilibrium; Welfare

\vspace{5.5cm}

\noindent We are grateful for comments and suggestions from Teevrat Garg, Rick Klotz, and Jeffrey Wagner, from participants at the 2025 AERE Summer Conference, 2025 AERE@SEA Sessions, 2026 AERE@MEA Sessions, and the 2026 World Congress of Environmental and Resource Economists (WCERE), as well as seminars at Colgate University, Miami University (OH), and Smith College. We thank Elly Beauchesne for excellent research assistance. All mistakes are our own. No AI was employed in writing of this manuscript or any of the analysis therein; however, Gemini was used for troubleshooting and editing Matlab code.

\newpage
\setcounter{page}{1}

\section{Introduction}

Demand for carbon offsets from policymakers, corporations, and nonprofits is growing around the world \citep{harrison2024carbon}. As the size of the offset market grows, there will be increased impacts of offset policies on other markets, emissions, and the broader economy. However, the general equilibrium effects of carbon offsets are not well studied. Furthermore, the (non-)additionality of carbon offsets is a well-known problem \citep{bushnell2010economics, zhang2011co, calel2025carbon, aspelund2026}. In this study, we examine the economy-wide impacts of carbon offsets using a general equilibrium framework, explore how general equilibrium effects limit the accuracy of offset carbon accounting metrics with respect to actual emission changes, and show that neither direct nor aggregate emission changes are sufficient statistics for welfare change. We also investigate the consequences of offset (non-)additionality in general equilibrium.

We address these issues by constructing a tractable general equilibrium model of an economy with carbon offsets. The model has two types of inputs---clean and dirty---and the clean input can be subsidized by carbon offset payments. For simplicity, we refer to these inputs as renewable (clean) energy and fossil fuel (dirty) energy, but our model can be applied to non-energy settings such as agricultural carbon offsets. Consumers gain utility from a composite final consumption good, but lose utility from the emissions externality. Emissions are produced by both the fossil fuel and final good sectors. We solve for closed-form solutions and find that raising the carbon offset price increases clean energy production while decreasing final good production. However, dirty energy production can increase and aggregate emissions can increase too. Although a reduction in aggregate emissions is a necessary condition for an increase in welfare, it is not sufficient. Rather, welfare effects depend on the relative size of the consumption good decrease (i.e., the policy's cost) versus the value of the emissions change.

We also decompose carbon offsets into four types of additional and non-additional offsets. Additionality is the idea that the emissions reductions from an offset-generating project would not have occurred in the absence of the offset payment. We describe the margins upon which additional and non-additional offsets respond to price changes, one of which has not previously been identified in the literature.  Then, we assess the model's results using two separate carbon accounting metrics: conventional (i.e., direct) carbon accounting and aggregate carbon accounting. Conventional carbon accounting credits offsets with the direct emissions reductions they cause and aligns with most practitioners' definitions. Aggregate carbon accounting is the change in total, economy-wide emissions that includes both direct and indirect changes. We analytically and numerically demonstrate the difference between the two metrics and how that difference changes as key parameters shift. Importantly, we highlight a type of rebound effect where conventional carbon accounting often over-credits offsets when compared to the aggregate emissions decreases (due to market and economy-wide responses). However, we also document cases where conventional carbon accounting under-credits carbon offsets. Yet, even if we assume that all offsets are real and additional---as measured by traditional screening tools---our results suggest a further problem with carbon accounting: real and additional offsets can increase aggregate emissions. 

In the economics literature, the term ``additionality'' is used in ways that can align with our two carbon accounting metrics and definitions; thus, we introduce more precise terminology for discussing offset additionality and its measurement. \cite{calel2025carbon} examine additionality of wind projects in India at the project-level, determining that at least 52 percent of the projects are non-additional. In our context, these projects are called ``non-additional by traditional, project-based assessments'' and their emissions reductions are not credited using our conventional carbon accounting metric. Other economists have taken a system-wide view of additionality in line with our aggregate carbon accounting metric. For example, \cite{zhang2011co} examine the additionality of power sector Clean Development Mechanism (CDM) projects in China using economy-wide emissions (similar to our aggregate carbon accounting metric), since project-specific emissions data were not available. \cite{bushnell2010economics} is one of the few studies that makes the distinction between project-level and aggregate impacts, albeit in a less formal manner than our treatment. 

There is a small but growing literature that empirically assesses carbon offset additionality \citep{schneider2011perverse, calel2025carbon, aspelund2026}, and a small theoretical literature on offset additionality, with the latter focusing on screening and other solutions to additionality concerns \citep{ montero2000optimal,mason2013additionality, vanbenthem2013scale,globus2020impossible}. Instead, we focus on precisely defining two distinct carbon accounting metrics, as well as describing several different margins upon which carbon offsets can respond to price changes. These additionality and measurement concerns are connected to a larger literature on carbon policy leakage and that literature focuses more broadly on emissions leakage under carbon taxes, cap-and-trade systems, or other regulations (e.g., \citet{mcguinness2008effects, fowlie2016market, guo2023cost, wang2024trade}. A more limited literature examines leakage in the context of carbon offsets; for example, \citet{filewod2023avoiding} discuss various types of emissions leakage including what they call ``market'' leakage that operates via price effects. The bulk of this literature consists of empirical work quantifying the extent of offset leakage \citep{alix2012forest, heilmayr2020deforestation}. We extend the literature on carbon offset leakage by examining offsets and their net impact on emissions in a general equilibrium model while accounting for all possible channels of market-based leakage.

We also contribute to the literature that uses general equilibrium models to examine environmental policies. Previous studies use these models to examine pollution taxes \citep{fullerton2007general,ren2011optimal, garnache2022environmental}, renewable portfolio standards \citep{bento2018emissions, fullerton2025determines}, emissions trading schemes \citep{konishi2015emissions}, leakage \citep{bohringer2025measures, karney2025model}, and environmental mandates \citep{fullerton2010general, hafstead2018unemployment, fullerton2020costs}. We expand this literature by examining the general equilibrium impacts of carbon offset policies on carbon accounting and emissions.

The carbon offset payments in our model can be interpreted as a type of abatement subsidy or clean input subsidy. In the simplest partial equilibrium setting, such a subsidy can be equivalent to a Pigouvian tax \citep{baumol1988theory}; however, in a more general setting, there are several important differences between an abatement subsidy and a Pigouvian tax on carbon emissions \citep{sallee2025trouble}. First, in our model, carbon offset payments only partially subsidize abatement due to regulatory constraints: producers are paid for producing more renewable energy (thereby avoiding emissions), but are not compensated for reducing emissions in other sectors, making the carbon offset payments an incomplete abatement subsidy. Second, to isolate the general equilibrium effects of carbon offsets in the economy, we specify that the government is the sole purchaser of offsets. As a result, the government must generate revenue for the offset payments.\footnote{This revenue generation effect on welfare has similarities to the double dividend literature (see \cite{bento2024environmental} for an overview of this literature and its history): the double dividend refers to the possibility that a revenue-generating Pigouvian tax can correct externalities while also decreasing the burden of a distortionary tax system. We see the reverse, carbon offset payments are a type of subsidy requiring revenue generation. Relatedly, the zero-profit conditions in our model mean that voluntary offset payments cannot occur.} 
Third, while the revenue-generating taxes in our model are equivalent to lump-sum taxes and thus a pure transfer, implementing the offset payments still incur a real welfare cost via lost consumption. In particular, the offset payments change relative prices in the economy and so introduces a distortion. Therefore, carbon offsets partially correct an externality while introducing a distortionary cost. These costs can be seen in our welfare analysis: carbon offsets may decrease emissions, but they also decrease final good consumption.\footnote{If raising revenue did distort behavior, then this would be an additional cost of the offset policy \citep{bovenberg1994environmental}. However, the lump-sum tax assumption isolates only the effect of the offset policy rather than simultaneously considering distortions from the tax structure that raises revenue. Thus, our analysis provides a clean baseline for understanding the welfare effects of offset policies.}

As a numerical exercise to illustrate the range of outcomes, we parameterize our model using plausible parameter values from the U.S. economy. For a large set of cases, we find that carbon offsets are over-credited because they cause aggregate emissions to decrease by less than the amount credited using conventional carbon accounting. This is true even when offset projects would not have occurred in the absence of the offset payments---that is, the offsets are real and additional by traditional project-based measures---and arises due to downstream effects on other quantities in the economy. Under certain conditions, carbon offsets can actually \textit{increase} carbon emissions due to these economy-wide effects. This outcome is similar to that found in the rebound effect literature, where an extreme rebound effect (referred to as ``backfire'') means that an improvement in energy efficiency leads to an increase in energy use.\footnote{The rebound effect is when an improvement in energy efficiency causes an increase in use of a durable good, as its marginal operating cost has decreased \citep{gillingham2016rebound}.} 
However, in our setting, the mechanism is different. The traditional rebound effect is driven by a change in marginal cost and is therefore most strongly seen in partial equilibrium while tending to be muted in general equilibrium \citep{fullerton2020costs}. Our backfire effect is present only in general equilibrium due to relative price changes.

Conversely, we find that carbon offsets can also be under-credited; that is, the real emission reductions can be greater than those calculated under conventional carbon accounting. 
As an example, suppose that a significant proportion of wind turbines receiving offset payments would have been built without an offset policy and are considered non-additional for carbon accounting purposes \citep{calel2025carbon}. If there is an increase in the offset price after the turbines were built, then the profit maximizing operators respond to this marginal incentive by increasing output, perhaps by reducing downtime for maintenance.\footnote{\citet{davis2012deregulation} find that market-based incentives from deregulation reduces outages at nuclear power plants, while \citet{karney2019electricity} establishes that output at these plants increases due to market-wide environmental policy.} 
However, if the increase in output (and corresponding emissions decrease elsewhere) from these traditionally non-additional sources is not counted as an effect of an offset price increase, then the conventional carbon accounting metric under-credit offsets relative to the actual emission reductions. This is the additionality margin not previously identified in the literature.

Our findings have important implications for carbon accounting and offset policy, since these policies can have substantial general equilibrium effects due to the (potential) size of the carbon market. In particular, policymakers should be aware that even real and additional carbon offsets have spillover effects that may erode their emissions benefits, and in the extreme case, can even increase aggregate emissions. Practitioners may wish to re-examine how they credit carbon offsets to account for these market spillover effects. Furthermore, we detail how conventional carbon accounting may be under-crediting offsets and, generally, is a poor proxy for welfare. In this way, we add to the literature on the limitations of carbon offsets and carbon accounting metrics as a tool to expand the reach and reduce the costs of incomplete carbon regulation.

The remainder of the paper is organized as follows. Section~\ref{Sec:Model} sets out the general equilibrium model. Section~\ref{Sec:Additionality} discusses and formalizes our definitions of conventional and aggregate carbon accounting, and delineates our taxonomy of offsets and their margins of adjustment. Section~\ref{Sec:Welfare} provides definitions of welfare and the marginal value of public funds (MVPF), and discusses optimal policy in this setting. Section~\ref{Sec:Analytical} presents the analytical solution to the model, and Section~\ref{Sec:Numeric} provides illustrative numerical examples to highlight the relative importance of different parameters, as well as a case study of carbon capture and storage subsidies. Section~\ref{sec:conclusion} concludes and discusses policy implications.


\section{Model} \label{Sec:Model}

We model an economy with two types of energy---a single final consumption good and an emissions externality ---following \citet{fullerton2025determines}.\footnote{\citet{fullerton2025determines} examine renewable portfolio standards, while we examine carbon offsets. Our model can be interpreted as applying to renewable energy credits, which are a component of many renewable portfolio standard policies. Our model also includes a fixed factor component in the renewable energy sector, which adds additional complexity that reflects real-world resource constraints and is necessary for tractability.} We extend their model to include carbon offset payments to renewable energy producers. Total productive resources in the economy are represented by an aggregate of capital, labor, and materials. This total resource is fixed and denoted by $\bar K$. We call this aggregate resource simply ``capital'' for convenience, but it can be more generally defined as a composite of the primary inputs to production. Capital is employed to produce three goods: an aggregate, final consumption good ($X$); renewable energy ($R$); and fossil fuel energy ($F$). Although we call these inputs ``energy'', the model works for any clean and dirty inputs; for example, $R$ could be low-carbon emission agriculture (e.g., no-till) while $F$ is conventional agriculture, both used to produce the consumption good (food). There are many other possible examples that fit the model, including: landfills with and without methane capture technology, traditional cookstoves versus clean cookstove technology, forestry services, and livestock farms with and without biogas technology.

The resource constraint in the economy is given by $\bar K = K_X+K_R+K_F$ where $K_i$ denotes the amount of capital employed in sector $i \in \{X,R,F\}$. The additive form of the resource constraint implies capital is homogeneous and thus paid a uniform price ($P_K$). We log-linearize the economy as shown in Appendix \ref{App:LogLinear}. The log-linear version of the resource constraint becomes:
\begin{align} \label{Eq:Resource}
    0 &=\alpha_X \hat K_X + \alpha_R \hat K_R + \alpha_F \hat K_F,    
\end{align}
where the ``hat'' notation indicates a small, marginal change; specifically, $\hat{K}_i \equiv \frac{d K_i}{K_i}$. The parameters $\alpha_i \equiv \frac{K_i}{\bar K}$ in equation \ref{Eq:Resource} express the initial share of capital employed in each production sector, and thus $1 = \alpha_X + \alpha_R + \alpha_F$. Therefore, an increase in capital use in the renewable energy sector $(R)$ requires a decrease in capital use in the fossil fuel energy sector $(F)$ and/or final goods sector $(X)$. As discussed in Section \ref{Sec:Welfare}, the potential to decrease capital use in the final goods sector is important because consumers receive their consumption utility directly from good $X$.

The final consumption good ($X$) is produced via a constant returns to scale (CRTS) production function using capital ($K_X$) and total energy ($E$) as inputs given by $X = X(K_X, E)$, and sold at price $P_X$. Total energy is the combination of fossil ($F$) and renewable ($R$) energy produced by an energy sector $E$ then sold at price $P_E$ to sector $X$.\footnote{Alternatively, $E$ could be produced within sector $X$ via a nested sub-production function.} Totally differentiating sector $X$'s production function gives the condition:
\begin{align}
    \hat X &= \theta_{XK}\hat K_X+\theta_{XE}\hat{E}, \label{Eq:ProductionX}
\end{align}
where $\hat X$ is the change in total final goods production $(\hat X \equiv \frac{dX}{X})$, and the other ``hat'' variables are similarly defined. The $\theta_{Xj}$ parameters indicate the initial input costs shares such that $\theta_{XK} \equiv \frac{(1+t)P_KK_X}{P_XX}$ where $t$ is an \textit{ad valorem} tax on capital (equivalent to a lump-sum tax since $\bar K$ is fixed and all capital is sold into the input market). This means $(1+t)P_K$ is the per unit cost of capital for firms; the capital tax revenue will be used to pay for carbon offsets. Similarly, $\theta_{XE} \equiv \frac{P_EE}{P_XX}$ is the input cost share for energy and the cost shares must sum to one due to the CRTS assumption meaning $1 = \theta_{XK} + \theta_{XE}$. The CRTS assumption also implies a zero-profit condition for sector $X$ given by $P_X X = (1+t)P_K K_X + P_E E$; totally differentiating yields:
\begin{align}
    \hat P_X +\hat{X} &= \theta_{XK} (\hat t + \hat P_K + \hat K_X) + \theta_{XE} ( \hat P_E+ \hat E), \label{Eq:ZeroProfitX}
\end{align}
where $\hat t \equiv \frac{dt}{1+t}$ is the tax change.\footnote{Equation \ref{Eq:ZeroProfitX} can be represented as $\hat P_X = \theta_{XK} (\hat t + \hat P_K) + \theta_{XE} \hat P_E$ after subtracting equation \ref{Eq:ProductionX}. In this form, equation \ref{Eq:ZeroProfitX} says that the change in the output price is solely a function of the changes in the inputs prices and tax rate. The log-linear zero-profit condition for each of the other production sectors can be rewritten similarly.} Substitution in production between the capital and energy inputs in the final goods sector is summarized by the elasticity of substitution $\sigma_X \geq 0$. Applying the elasticity's definition leads to:
\begin{align}
    \hat K_X-\hat E &= \sigma_X (\hat P_E - \hat t - \hat P_K), \label{Eq:ElastX}
\end{align}
such that increasing the energy price $(\hat P_E>0)$, all else equal, increases the ratio of capital to energy $(\hat K_X-\hat E>0)$ when $\sigma_X > 0$, and, conversely, the capital-to-energy input ratio falls if the price of capital or the capital tax increases. As $\sigma_X$ goes to infinity the inputs tend towards perfect substitution, while $\sigma_X=0$ implies perfect complements (i.e., Leontief production).

The energy sector's CRTS production function is given by $E = E(F,R)$, and differentiating finds:
\begin{align}
    \hat E &= \theta_{EF} \hat F + \theta_{ER} \hat R. \label{Eq:ProductionE}
\end{align}
The input share parameters in equation \ref{Eq:ProductionE} are defined as $\theta_{EF}\equiv \frac{P_FF}{P_EE}$ and $\theta_{ER}\equiv \frac{P_RR}{P_EE}$, where $P_F$ and $P_R$ are the prices for renewable and fossil fuel energy, respectively. The CRTS assumption means $1 = \theta_{EF} + \theta_{ER}$ and totally differentiating the zero-profits condition $(P_E E = P_F F + P_RR)$ recovers:
\begin{align}
    \hat P_E +\hat E &= \theta_{EF} ( \hat P_F+\hat F ) + \theta_{ER} ( \hat P_R+\hat R ). \label{Eq:ZeroProfitE}
\end{align}
Also, it is helpful to define $\theta_{XF} \equiv \frac{P_FF}{P_XX} = \theta_{XE} \theta_{EF}$ and $\theta_{XR} \equiv \frac{P_RR}{P_XX} = \theta_{XE} \theta_{ER}$ as the shares of dirty and clean energy input costs, respectively, in the production of the final consumption good; it follows that $1 = \theta_{XK} + \theta_{XF} + \theta_{XR}$. The elasticity of substitution in energy production, $\sigma_E \geq0$, summarizes substitution in production in the energy sector, and applying the definition yields:
\begin{align}
    \hat R-\hat F &= \sigma_E (\hat P_F - \hat P_R). \label{Eq:ElastE}
\end{align}
As $\sigma_E>0$ increases, then the different types of energy become more substitutable, while $\sigma_E=0$ implies that renewable and fossil fuel energy are prefect complements in the energy sector $(E)$. 

The production of renewable energy $(R)$ requires the primary factor, capital, as well as a sector-specific resource denoted $\bar{Q}$. (We delineate additional and non-additional $R$ later in this section.) The sector-specific resource enters the CRTS production function, $R = ( K_R,\bar{Q})$, as a fixed factor such that $d\bar{Q}=0$. The fixed factor accounts for the limited locations suitable for building renewable energy projects such as wind and solar farms (e.g., suitable windy locations and suitable sunny locations).\footnote{$Q$ could also represent scarce non-renewable resources that are critical to the clean energy sector, such as the rare earth elements used in wind turbines \citep{imholte2018assessment}.} Rents from the sector-specific resource appear in consumer income at the price $P_{Q}$. The fixed factor implies a diminishing marginal product of capital in $R$; as prime locations for wind or solar become scarce, the marginal site produces less energy per solar panel or wind turbine due to suboptimal conditions. After totally differentiating, the production function is given by:
\begin{align}
    \hat R &= \theta_{RK} \hat K_R, \label{Eq:ProductionR}
\end{align}
where $\theta_{RK} \equiv \frac{(1+t)P_KK_R}{(P_R+s)R}$ is the input cost share for capital (i.e., primary inputs). Recall that $t$ is the \textit{ad valorem} tax on the homogeneous capital. The revenue from the capital tax pays for the per unit carbon offset price, $s$, given to renewable energy production. Thus, $(P_R+s)R$ is the net of offset payment revenue in sector $R$. The CRTS assumption implies the zero-profit condition $(P_R +s) R = (1+t)P_K K_R + P_{Q}\bar{Q}$ and totally differentiating finds:
\begin{align}
    (1-\gamma)\hat P_R + \gamma\hat s + \hat R &= \theta_{RK} (\hat t + \hat P_K + \hat K_R) + \theta_{RQ} \hat{P}_{Q}, \label{Eq:ZeroProfitR}
\end{align}
where $\hat s \equiv \frac{ds}{s}$ is the change in the carbon offset price. The definition of $\hat s$ requires an initial, non-zero carbon offset, and then considers a small change in the offset value.\footnote{We note that $s$ could represent a composite of offset payments from several sources. For instance, it could be a combination of traditional carbon offset payments, the federal Clean Electricity Production Tax Credit, and state renewable energy credits.} We define the fixed factor input cost share as $\theta_{RQ} \equiv \frac{P_Q\bar Q}{(P_R+s)R}$, and then by CRTS it follows that $1  = \theta_{RK} + \theta_{RQ}$. Note that price, or rent, of the fixed factor can vary $(\hat P_Q)$. Also, we define $\gamma \equiv \frac{s}{s+P_R}$ as the share of the clean energy sector's return due to the offset.\footnote{Under perfect capital mobility and homogeneity, the tax applies to all sectors with capital.} Regardless of the fixed factor ($\hat Q = 0$), the substitution elasticity in $R$ is still defined and given by $\sigma_R\geq0$, and applying the definition recovers:
\begin{align}
    \hat K_R &= \sigma_R (\hat P_{Q} - \hat{t} - \hat P_K). \label{Eq:ElastR}
\end{align}
This $\sigma_R$ is not the renewable energy supply elasticity (although the two concepts are related). Rather, it measures how the clean energy sector substitutes between inputs.

For production of fossil fuel energy, we assume the production is simply $F = K_F$. That is, the production of $F$ only requires use of the primary input (capital), and totally differentiating leads to:
\begin{align}
    \hat F &= \hat K_F. \label{Eq:ProductionF}
\end{align}
This modeling choice follows from our national-level interpretation of this setting where the U.S. (or another other country or region) has taxing authority to raise revenue to buy offsets that subsidize local renewable energy production. In other words, U.S. tax revenue is subsidizing U.S. renewable energy projects. In the U.S., the number of feasible renewable energy sites is limited, leading to the fixed factor in the production of $R$. However, production of fossil fuel energy $F$ is not limited by fixed factors as the U.S. can import virtually unlimited quantities of fossil fuels. The prior literature has assumed fixed factors of production in the renewable energy sector as we do (see \citet{karney2025model}), or no fixed factors of production (see \citet{goulder2017confronting}).

The single input to $F$ implies no substitution in production However, the zero-profit condition, $P_F F = (1+t) P_K K_F$, due to CRTS still holds and totally differentiating reveals:
\begin{align}
    \hat P_F + \hat F &= \hat t + \hat P_K + \hat K_F. \label{Eq:ZeroProfitF}
\end{align}

The many identical households in our model face the budget constraint $P_X X \leq I = P_K \bar{K}+P_{Q}\bar{Q}+B$, where $I$ is total income (equivalent to GDP) and $B$ is the government's net budget given by $B=t(P_K\bar{K})-sR$. As Appendix \ref{App:LogLinear} shows, the zero-profit conditions imply the representative consumer's budget constraint and implies that fully voluntary offsets are not admissible in the model. To focus attention on the effects of carbon offsets, we assume a balanced budget constraint that all tax revenue is spent by the government on offset payments such that $B=0$. In equilibrium, consumers spend all their income on $X$. Also, since we fix the total amount of capital, the capital tax is equivalent to a lump-sum tax on consumers. Totally differentiating the government's balanced budget constraint finds:
\begin{align}
    \hat s + \hat R  = \psi \hat t + \hat P_K, \label{Eq:BalanceBudget}
\end{align}
where $\psi \equiv \frac{1+t}{t}$ scales the \textit{ad valorem} capital tax relative to the per unit carbon offset price.

Finally, we close the general equilibrium system by defining $X$ as the numeraire good and setting $P_X \equiv 1$. This numeraire choice is equivalent to setting the consumer price index (CPI) as the numeraire since there is only one consumption good. Totally differentiating $P_X$ simply yields:
\begin{align}
    \hat P_X &= 0 \label{Eq:Numeraire}.
\end{align}
When $\hat P_X=0$, the existing per-unit offset price $s>0$ is held constant in real terms relative to the CPI and this avoids potentially perverse outcomes as demonstrated by \citet{garnache2022environmental}. 

The linear system defined by equations \ref{Eq:Resource} through \ref{Eq:Numeraire} includes two policy variables $\{\hat t,\hat s\}$ along with 13 endogenous price-quantity variables $\{ \hat K_X,\hat K_R, \hat K_F, \hat X, \hat E, \hat F, \hat R, \hat P_K, \hat P_X, \hat P_E, \hat P_F, \hat P_R, \hat{P}_{Q} \}$. We study the effect of exogenously increasing the carbon offset payment $(\hat s>0)$ such that $\hat t$ becomes the last endogenous variable (via the balanced-budget condition). 

What about additional versus non-additional offsets? We start with the premise that the initial equilibrium contains non-additional $R$. That is, some proportion of $R$ is not due to the existing offset price, meaning some offsets are non-additional by traditional measures (see Section~\ref{Sec:Additionality} for further discussion). Assuming that additional and non-additional $R$ are perfect substitutes, total $R$ can be decomposed into $R = R^A + R^N$, where $R^A > 0$ denotes the additional component of $R$, and $R^N > 0$ denotes the non-additional component.

In Appendix~\ref{App:Non-Additional}, we detail an expansion of the model outlined above where $R^A$ and $R^N$ are each produced in separate production sub-sectors. Then, we show that $\hat{R} = \hat{R}^A = \hat{R}^N$ under the added assumption that the additional and non-additional sub-sectors have the same cost functions.\footnote{This baseline assumption of identical costs functions allows us to identify the effects in our model without unnecessarily complicating our general analytical results. Numerical models looking to analyze specific offset policy proposals can relax this assumption when calculating particular policy outcomes.}
In other words, production of both the additional and non-additional offsets change in equal proportion when the offset price changes. Totally differentiating the linear substitution condition leads to the auxiliary equation $\hat{R} = \phi \hat{R}^A + (1-\phi) \hat{R}^N$ that generally represents the separation between additional and non-additional offsets in the model, where $\phi \equiv R^A/R$ is the initial share of additional offsets.\footnote{Note that the condition $\hat{R} = \hat{R}^A = \hat{R}^N$ is sufficient for the auxiliary equation to hold. However, if the non-additional offsets use more productive fixed resources, $Q$, per unit of output, then the marginal product of capital is higher in the non-additional offset sub-sector, and thus the capital input cost share is lower than in the additional offset sub-sector; that is, $\theta_{AK} > \theta_{NK}$ while $\theta_{AQ} < \theta_{NQ}$ (see Appendix~\ref{App:Non-Additional} for notation). In this case, the non-additional offset sub-sector is more fixed resource intensive, all else equal. Then, $\hat{R}^A\neq\hat{R}^N$, but $\hat{R} = \phi \hat{R}^A + (1-\phi) \hat{R}^N$ holds.} 

Figure \ref{Fig:Model2} provides an illustration of the modeled economy. Each box represents a production sector as well as a box for the representative consumer. The arrows from the consumer to the production sectors denote the flow of the primary inputs $(K_F,K_R,K_X)$. The arrows return to the consumer show the quantities of the final good $(X)$ and pollution $(Z_F,Z_X)$ that enter the consumers utility function. (Note that $Z_F$ and $Z_X$ are not explicitly solved in the linear system as they are perfectly correlated with $F$ and $X$, respectively, as discussed in Section \ref{Sec:Additionality}.)  The figure illustrates that sector $E$ simply transforms clean and dirty energy into a composite energy service for the final goods sector. As we show in Section \ref{Sec:Numeric}, the elasticity of substitution in sector $E$ given by $\sigma_E$ is key to determining outcomes.


\begin{figure}

\tikzset{every picture/.style={line width=0.75pt}} 

\begin{tikzpicture}[x=0.75pt,y=0.75pt,yscale=-1,xscale=1]

\draw   (431,178) -- (550,178) -- (550,230) -- (431,230) -- cycle ;
\draw   (91,59) -- (219,59) -- (219,124) -- (91,124) -- cycle ;
\draw   (91,180) -- (220,180) -- (220,232) -- (91,232) -- cycle ;
\draw   (272,70) -- (391,70) -- (391,122) -- (272,122) -- cycle ;
\draw   (460,39) -- (579,39) -- (579,91) -- (460,91) -- cycle ;
\draw    (391,92.5) -- (457.12,68.19) ;
\draw [shift={(459,67.5)}, rotate = 159.81] [color={rgb, 255:red, 0; green, 0; blue, 0 }  ][line width=0.75]    (10.93,-3.29) .. controls (6.95,-1.4) and (3.31,-0.3) .. (0,0) .. controls (3.31,0.3) and (6.95,1.4) .. (10.93,3.29)   ;
\draw    (220,98) -- (270,97.52) ;
\draw [shift={(272,97.5)}, rotate = 179.45] [color={rgb, 255:red, 0; green, 0; blue, 0 }  ][line width=0.75]    (10.93,-3.29) .. controls (6.95,-1.4) and (3.31,-0.3) .. (0,0) .. controls (3.31,0.3) and (6.95,1.4) .. (10.93,3.29)   ;
\draw    (431,200.25) -- (221.35,113.02) ;
\draw [shift={(219.5,112.25)}, rotate = 22.59] [color={rgb, 255:red, 0; green, 0; blue, 0 }  ][line width=0.75]    (10.93,-3.29) .. controls (6.95,-1.4) and (3.31,-0.3) .. (0,0) .. controls (3.31,0.3) and (6.95,1.4) .. (10.93,3.29)   ;
\draw    (431,210.5) -- (222,210.5) ;
\draw [shift={(220,210.5)}, rotate = 360] [color={rgb, 255:red, 0; green, 0; blue, 0 }  ][line width=0.75]    (10.93,-3.29) .. controls (6.95,-1.4) and (3.31,-0.3) .. (0,0) .. controls (3.31,0.3) and (6.95,1.4) .. (10.93,3.29)   ;
\draw    (471,177.5) -- (470.02,93.5) ;
\draw [shift={(470,91.5)}, rotate = 89.33] [color={rgb, 255:red, 0; green, 0; blue, 0 }  ][line width=0.75]    (10.93,-3.29) .. controls (6.95,-1.4) and (3.31,-0.3) .. (0,0) .. controls (3.31,0.3) and (6.95,1.4) .. (10.93,3.29)   ;
\draw    (531,91) -- (531.65,175) ;
\draw [shift={(531.67,177)}, rotate = 269.56] [color={rgb, 255:red, 0; green, 0; blue, 0 }  ][line width=0.75]    (10.93,-3.29) .. controls (6.95,-1.4) and (3.31,-0.3) .. (0,0) .. controls (3.31,0.3) and (6.95,1.4) .. (10.93,3.29)   ;
\draw    (90.33,209) -- (58,209) ;
\draw    (58,209) -- (58,42) ;
\draw    (58,42) -- (289,42) ;
\draw    (289,42) -- (289,67) ;
\draw [shift={(289,69)}, rotate = 270] [color={rgb, 255:red, 0; green, 0; blue, 0 }  ][line width=0.75]    (10.93,-3.29) .. controls (6.95,-1.4) and (3.31,-0.3) .. (0,0) .. controls (3.31,0.3) and (6.95,1.4) .. (10.93,3.29)   ;
\draw    (220.33,220.33) -- (429.67,220.33) ;
\draw [shift={(431.67,220.33)}, rotate = 180] [color={rgb, 255:red, 0; green, 0; blue, 0 }  ][line width=0.75]    (10.93,-3.29) .. controls (6.95,-1.4) and (3.31,-0.3) .. (0,0) .. controls (3.31,0.3) and (6.95,1.4) .. (10.93,3.29)   ;

\draw (437,186) node [anchor=north west][inner sep=0.75pt]   [align=left] {\begin{minipage}[lt]{74.16pt}\setlength\topsep{0pt}
Representative 
\begin{center}
consumer
\end{center}

\end{minipage}};
\draw (101,63) node [anchor=north west][inner sep=0.75pt]   [align=left] {\begin{minipage}[lt]{74.72pt}\setlength\topsep{0pt}
\begin{center}
Renewable\\energy sector \\($\displaystyle R=R^{A} +R^{N}$)
\end{center}

\end{minipage}};
\draw (93,183) node [anchor=north west][inner sep=0.75pt]   [align=left] {\begin{minipage}[lt]{80.57pt}\setlength\topsep{0pt}
\begin{center}
Fossil fuel\\energy sector ($\displaystyle F$)
\end{center}

\end{minipage}};
\draw (284,76) node [anchor=north west][inner sep=0.75pt]   [align=left] {\begin{minipage}[lt]{67.04pt}\setlength\topsep{0pt}
\begin{center}
Total energy\\production ($\displaystyle E$)
\end{center}

\end{minipage}};
\draw (469,46) node [anchor=north west][inner sep=0.75pt]   [align=left] {\begin{minipage}[lt]{73.99pt}\setlength\topsep{0pt}
\begin{center}
Consumption\\good sector ($\displaystyle X$)
\end{center}

\end{minipage}};
\draw (414,64) node [anchor=north west][inner sep=0.75pt]   [align=left] {$\displaystyle E$};
\draw (236.67,79) node [anchor=north west][inner sep=0.75pt]   [align=left] {$\displaystyle R$};
\draw (161.33,22.67) node [anchor=north west][inner sep=0.75pt]   [align=left] {$\displaystyle F$};
\draw (536.67,121.67) node [anchor=north west][inner sep=0.75pt]   [align=left] {$\displaystyle X,Z$};
\draw (473.33,124) node [anchor=north west][inner sep=0.75pt]   [align=left] {$\displaystyle K_{X}$};
\draw (357.17,151.5) node [anchor=north west][inner sep=0.75pt]   [align=left] {$\displaystyle K_{R}$};
\draw (283.33,190.33) node [anchor=north west][inner sep=0.75pt]   [align=left] {$\displaystyle K_{F}$};
\draw (286,225.67) node [anchor=north west][inner sep=0.75pt]   [align=left] {$\displaystyle Z$};

\end{tikzpicture}
\caption{Illustration of the model}\label{Fig:Model2}
\end{figure}


\section{Additionality} \label{Sec:Additionality}

Carbon offset programs function under the assumption that offsets are both real and additional. In other words, it is assumed that the activity results in an actual reduction in emissions that would not have occurred in the absence of the offset payment. There are multiple methodologies to determine the impacts offset programs \citep{schneider2009assessing}. However, all methods are applied at the project-level, and tend to inadequately account for market spillover effects and other types of leakage. \cite{millard2010constructing} document the leakage assessment requirements of a variety of CDM methodologies, and note that virtually all energy-sector methodologies ignore market spillover effects. That is, the renewable energy produced is assumed to directly offset an equivalent amount of fossil fuel energy with no impacts on energy or other market prices and quantities. \cite{vohringer2006attribute} corroborates this, finding that most CDM projects do not account for leakage due to market and price effects. \cite{calel2025carbon} document this for CDM wind farm projects in India, noting:
\begin{displayquote}The [offset] revenue is estimated by multiplying the electricity that would be generated by a `business-as-usual' emissions factor. Most projects use a factor equal to the generation-weighted average carbon dioxide emissions per unit of net electricity generation from all generating power plants serving the same regional grid (tCO$_2$/MWh). The assumption is that the new project would replace an equivalent amount of `business-as-usual' generation capacity, avoiding the associated emissions.\end{displayquote}

\cite{schneider2009assessing} provides an overview of additionality assessment in the CDM. He finds that the two most prevalent additionality assessment methods are barrier analysis and investment analysis, both of which are project-level assessments focused on direct emissions reductions. Barrier analysis requires that the offset-producing project demonstrate that there is some sort of barrier preventing the project from moving forward without the offset payment. This barrier is often financial, but could be technological or policy-related. Investment analysis involves demonstrating that the offset-generating project is not the most financially attractive option. Then, these projects are assumed to directly offset an equivalent amount of dirty generation, where the dirty quantity reduced determines the offset credits that the projects earn. We refer to these commonly used categorizations of offsets by the above screening tools as ``additional (or non-additional) by traditional measures.'' These project-level assessments of additionality and crediting for direct emissions reductions align with our conventional carbon metric accounting as outlined below.


\subsection{Decomposition of Offset Types}\label{Sec:Non-additional}

Here, we discuss more subtle categorizations of offsets by creating a taxonomy to decompose the offset types (and different sources of emissions changes) in our general equilibrium model. This taxonomy makes a distinction between the \textit{levels} of additional and non-additional offsets in the initial equilibrium as compared to offset \textit{changes} due to policy changes. Our general equilibrium setting requires such a categorization because all quantities need to be accounted for in the model, even the non-additional offsets. That is, once a non-additional offset is identified in general equilibrium, it still needs to be tracked for carbon accounting purposes, unlike in the project-level perspective. The offsets in this model can then be decomposed into four types as follows:
\begin{enumerate}
    \item Inframarginal offsets that are non-additional by traditional measures: $R^N$;
    \item Extensive-margin offsets that are additional by traditional measures: $R^A$;
    \item Intensive-margin offsets that are additional by traditional measures: $dR^A$; and
    \item Intensive-margin offsets that are non-additional by traditional measures: $ d R^N$.
\end{enumerate}
Note that $d R^i = R^i \hat{R}^i$ for $i \in \{A,N \}$ is the absolute quantity change in sub-sector $R^i$'s output. In defining the offset types, we use the term ``inframarginal'' to refer to the subset of $R$ that would have been produced without implementing the initial offset policy. Then, ``extensive margin'' refers to offsets that would not exist without the initial offset policy, while ``intensive margin'' refers to offsets that occur in response to the increase in the offset price. These three terms identify the four groups of offsets in the model. Importantly, $dR^N$ is an intensive-margin offset despite coming from the non-additional sector because it only exists in response to the increase in offset price.\footnote{The use of ``extensive'' and ``intensive'' margin here differs from their use in labor economics, for example, while noting that CRTS production makes no distinction between the number of firms and output per firm.}

We discuss each offset type in turn. To start, much of the empirical literature focuses on identifying the quantity of Type 1 non-additional offsets (see, for example, \citet{calel2025carbon}). In our taxonomy, $R^N$ is inframarginal as it neither arises due to the carbon offset's introduction $(s)$ nor the change in the offset price~$(\hat{s})$. Rather, $R^N$ is a parameter in the model. Policymakers care a great deal about this type because the sub-sector $R^N$ receives payments that it does not need to receive to exist.

Next, we call $R^A$ the Type 2 extensive-margin additional offsets. Again, the empirical literature focuses on identifying and measuring $R^A$ (similar to the goal of identifying $R^N$). However, offset payments to this sub-sector are typically deemed useful expenditures since $R^A$ would not exist without the payments. We call this an extensive margin, as the counterfactual without the offset payment involves no additional offsets in the initial equilibrium $(i.e., R^A=0).$

Continuing, Type 3 intensive-margin additional offsets represent the change in initially additional offsets. It is natural for the change in additional offsets $(dR^A)$ to be deemed additional by traditional measures (and this is particularly salient in partial equilibrium). The change in additional offsets can only exists if initially additional offsets exists and thus the designation as ``intensive'' margin.

Finally, to the best of our knowledge, Type 4 offsets have not been previously identified in the literature. These intensive-margin offsets occur because the firms in the $R^N$ sub-sector still profit maximize on the margin. That is, the change in the offset price provides a change in the marginal incentive for $R^N$ firms (although the initial quantity of $R^N$ would have been provided without the initial offset price $s$, see Type 1 offsets). In the wind generation context, an increase in the offset price induces firms to run their wind turbines more often, perhaps by minimizing down time.\footnote{\citep{johnson2026negative} find that the federal U.S. Production Tax Credit for eligible wind generators caused more negatively priced electricity hours, demonstrating the existence of these marginal effects.}.

Consider the extreme case where $R \approx  R^N$ and $R^A \approx 0$. Despite no traditionally additional offsets, policymakers would induce more intensive-margin offsets $(dR^N)$ in response to an increase in the offset price. Thus, even when all initial offsets are non-additional by traditional measures, there exists an avenue through which emissions may decrease via an increase in the offset price and subsequent quantity. This highlights an understudied channel whereby offset payments induce an intensive-margin adjustment from traditionally non-additional offset producers, suggesting that traditional measures of non-additionality may understate additionality in some cases.

In summary, $dR=dR^A+dR^N$ is always the quantity change in intensive margin additional offsets regardless of the share of additional and non-additional offsets in the initial equilibrium. Therefore, without loss of generality, Section~\ref{Sec:Analytical} solves for $\hat{R}$ to show how a change in the offset price affects the quantity of offsets and subsequent change in emissions.


\subsection{Carbon Accounting Metrics}\label{Sec:Definitions}

Here, we define two types of carbon accounting metrics to explore their general equilibrium implications: conventional and aggregate carbon accounting. Conventional carbon accounting is designed to capture the emissions changes estimated via standard carbon accounting practices, as outlined at the start of Section~\ref{Sec:Additionality}. Conversely, aggregate carbon accounting is designed to capture aggregate greenhouse gas emissions changes. Importantly, both are carbon accounting metrics, rather than offset types (see definitions above).

Conventional carbon accounting measures the increase in quantity of additional renewable energy $(dR^A>0)$ induced by the higher offset price (see Type 3 offsets) and then credits those renewable energy projects with the corresponding level of emissions that would have been produced by the same quantity of energy under fossil fuel energy production (the direct emissions change). We call this estimate the ``conventional carbon accounting'' metric for offsets, denoted~$\Omega_{C}$. It is defined as $\Omega_C \equiv -\xi_F dR^A$ and equivalent to:
\begin{align}
      \Omega_C = -(\xi_F R^A) \hat{R}^A = -(\xi_F \phi R) \hat{R}, \label{Eq:OffsetConventional}
\end{align}
where $\xi_F$ is the emission factor of fossil fuel energy $(F)$ such that $\xi_FR^A$ (or $\xi_F \phi R$) is the implied total conventional emissions offset prior to the change in the offset price (recalling $\phi \equiv R^A/R$ and using the result that $\hat{R}^A=\hat{R}$, see Section~\ref{Sec:Model}). Our definition of conditional carbon accounting means $\Omega_C<0$ when $\hat{R}>0$. Therefore, a larger increase in renewable energy $(\hat{R})$, a larger initial share of additional offsets $(\phi)$, or a larger emission factor for the fossil fuel sector $(\xi_F)$ imply a higher conventional measure, and hence a larger amount of credited emission reductions, all else equal. 

Yet, the purported purpose of carbon offset programs is to reduce carbon emissions. If so, then the true interest of policymakers from an emissions-only standpoint is the change in total emissions induced by the increase in the offset price. Total carbon emissions are given by $Z = Z_F +Z_X$ where $Z_F$ denotes emissions from the fossil fuel sector and $Z_X$ represents emissions in the rest of the economy. Thus, the production of fossil fuel energy $(F)$ and the production of the final good $(X)$ both generate emissions but the renewable energy sector $(R)$ does not. Then, following \cite{fullerton2025determines}, we define $Z_F\ \equiv \xi_FF$ and $Z_X \equiv \xi_XK_X$ where $\xi_X$ is the emission factor in the final goods sector related to the primary input $K_X$.\footnote{Note that emissions could be carbon or other greenhouse gases such as methane, nitrous oxides, or HFCs. Thus, the emissions intensity terms, $\xi_F$ and $\xi_X$, can capture both differing emissions rates produced by the two sectors but also different global warming potentials of the greenhouse gas emissions.} Substituting implies $Z=\xi_FF+\xi_XK_X$ and totally differentiating $Z$ then finds the proportional change in total emissions as follows:
\begin{align}
    \hat{Z}  = \rho_F \hat F + \rho_X \hat K_X. \label{Eq:Zhat}
\end{align}
where $\rho_F \equiv \frac{\xi_F F}{Z}$ and $\rho_X \equiv \frac{\xi_X K_X}{Z}$ are the initial shares of carbon emissions in the economy such that $1=\rho_F+\rho_X$ (see Appendix \ref{App:LogLinear} for derivations). As shown in Section \ref{Sec:Analytical}, the signs of $\hat F$ and $\hat K_X$ are generally ambiguous, and thus the sign of $\hat Z$ is generally ambiguous.

We define the quantity change in total emissions as the ``aggregate carbon accounting'' metric of offsets, denoted $\Omega_A$. Formally, it is defined as $\Omega_A \equiv \xi_FdF+\xi_XdK_X$ and equivalent to:
\begin{align}
    \Omega_A = (\xi_FF)\hat F+(\xi_XK_X)\hat K_X. \label{Eq:OffsetAggregate}
\end{align}    
Note that $\Omega_A$ can be negative or positive (similar to $\hat Z$).\footnote{Also, $\Omega_A = dZ$, which implies that equations \ref{Eq:Zhat} and \ref{Eq:OffsetAggregate} are related as follows: $\hat{Z}=\frac{\Omega_A}{Z} \Leftrightarrow Z\hat{Z}=\Omega_A.$} 

In general, it could be that the metric of conventional or aggregate carbon accounting is larger for a given increase in the offset price. We measure this difference via the ratio:
\begin{align}
    \Delta & \equiv \frac{\Omega_A-\Omega_C}{\Omega_C} = \frac{\Omega_A}{\Omega_C}-1, \label{Eq:Delta}\\
    &= - \left[ \frac{(\xi_FF)\hat F + (\xi_XK_X)\hat K_X}{(\xi_F \phi R) \hat{R}} \right] -1 . \nonumber
\end{align}

If $\Omega_C =\Omega_A$ such that conventional carbon accounting perfectly measures aggregate emissions changes, then $\Delta=0$. For many parameter values (see Section~\ref{Sec:Numeric}), we find $\Delta<0$ implying the conventional carbon accounting metric overstates emissions reductions relative to the aggregate metric. A value of $\Delta=-0.20$ means conventional carbon accounting overstates true emissions reductions by 20 percent while $\Delta=-1$ means there are no aggregate emission reductions $(\Omega_A=0)$ and the conventional carbon accounting metric records emissions reductions where none exist.\footnote{The ratio $\Delta$ is not defined when $\Omega_C=0$, but $\Omega_C>0$ when $\hat R>0$ and $\phi > 0$, and this always occurs with an increase in the offset price as shown in Section \ref{Sec:Analytical}.} In the extreme, $\Delta<-1$ means $\hat Z>0$ (implying $\Omega_A>0$) and this case is known as ``backfire'' where total carbon emissions actually increase as result of an offset policy. It is also possible to find $\Delta>0$ which means that conventional carbon accounting understates the true emission reductions in general equilibrium. Interestingly, assuming $\Omega_C<0$ and $\Omega_A<0$, then $\phi \rightarrow0$ implies $\Delta\rightarrow\infty$, and this demonstrates how ignoring the intensive-margin offsets from initially non-additional offsets (see Type 4 offsets) under-counts real emission reductions. We use the ratio in \ref{Eq:Delta} to compare the two carbon accounting metrics across scenarios with different levels of conventional carbon accounting. Using the closed-form solutions for the endogenous variables reported in Section \ref{Sec:Analytical}, we examine the extent to which these two carbon accounting metrics differ via the measure $\Delta$, and then discuss implications for policymakers and carbon accounting.

Figure~\ref{fig:taxonomy} provides an overview of the relationships between offset types and carbon accounting metrics defined in this section. To start, the blue boxes denote emission sources. These boxes include the four offset types and the general equilibrium effects that lead to emission changes in other sectors of the model's economy. Then, the green boxes denote how the emission sources are measured. Traditional additionality screening tools measure the Type 1 and 2 offsets at the initial equilibrium (and also the parameter $\phi$). A change in the offset prices leads to Type 3 and 4 offsets as well as the general equilibrium changes. However, the conventional carbon accounting metric $(\Omega_C)$ only captures Type 3 offsets, while all of the offsets induced by the price change contribute to the aggregate carbon accounting metric $(\Omega_A)$.


\begin{figure}
    
\resizebox{\textwidth}{!}{

\tikzset{every picture/.style={line width=0.75pt}} 

\begin{tikzpicture}[x=0.75pt,y=0.75pt,yscale=-1,xscale=1]

\draw  [fill={rgb, 255:red, 213; green, 229; blue, 247 }  ,fill opacity=1 ] (70.5,23) .. controls (70.5,16.37) and (75.87,11) .. (82.5,11) -- (174.5,11) .. controls (181.13,11) and (186.5,16.37) .. (186.5,23) -- (186.5,59) .. controls (186.5,65.63) and (181.13,71) .. (174.5,71) -- (82.5,71) .. controls (75.87,71) and (70.5,65.63) .. (70.5,59) -- cycle ;
\draw  [fill={rgb, 255:red, 213; green, 229; blue, 247 }  ,fill opacity=1 ] (71.5,231) .. controls (71.5,224.37) and (76.87,219) .. (83.5,219) -- (175.5,219) .. controls (182.13,219) and (187.5,224.37) .. (187.5,231) -- (187.5,267) .. controls (187.5,273.63) and (182.13,279) .. (175.5,279) -- (83.5,279) .. controls (76.87,279) and (71.5,273.63) .. (71.5,267) -- cycle ;
\draw  [fill={rgb, 255:red, 213; green, 229; blue, 247 }  ,fill opacity=1 ] (70.5,92) .. controls (70.5,85.37) and (75.87,80) .. (82.5,80) -- (174.5,80) .. controls (181.13,80) and (186.5,85.37) .. (186.5,92) -- (186.5,128) .. controls (186.5,134.63) and (181.13,140) .. (174.5,140) -- (82.5,140) .. controls (75.87,140) and (70.5,134.63) .. (70.5,128) -- cycle ;
\draw  [fill={rgb, 255:red, 213; green, 229; blue, 247 }  ,fill opacity=1 ] (70.5,161) .. controls (70.5,154.37) and (75.87,149) .. (82.5,149) -- (174.5,149) .. controls (181.13,149) and (186.5,154.37) .. (186.5,161) -- (186.5,197) .. controls (186.5,203.63) and (181.13,209) .. (174.5,209) -- (82.5,209) .. controls (75.87,209) and (70.5,203.63) .. (70.5,197) -- cycle ;
\draw  [fill={rgb, 255:red, 213; green, 229; blue, 247 }  ,fill opacity=1 ] (70.5,302.8) .. controls (70.5,295.18) and (76.68,289) .. (84.3,289) -- (172.7,289) .. controls (180.32,289) and (186.5,295.18) .. (186.5,302.8) -- (186.5,344.2) .. controls (186.5,351.82) and (180.32,358) .. (172.7,358) -- (84.3,358) .. controls (76.68,358) and (70.5,351.82) .. (70.5,344.2) -- cycle ;
\draw   (64,10.5) .. controls (59.33,10.47) and (56.98,12.78) .. (56.95,17.45) -- (56.58,65.7) .. controls (56.53,72.37) and (54.17,75.68) .. (49.5,75.64) .. controls (54.17,75.68) and (56.47,79.03) .. (56.42,85.7)(56.45,82.7) -- (56.05,133.95) .. controls (56.02,138.62) and (58.33,140.97) .. (63,141) ;
\draw   (64,149.5) .. controls (59.33,149.5) and (57,151.83) .. (57,156.5) -- (57,243) .. controls (57,249.67) and (54.67,253) .. (50,253) .. controls (54.67,253) and (57,256.33) .. (57,263)(57,260) -- (57,349.5) .. controls (57,354.17) and (59.33,356.5) .. (64,356.5) ;
\draw   (195,139.5) .. controls (199.67,139.5) and (202,137.17) .. (202,132.5) -- (202,85.5) .. controls (202,78.83) and (204.33,75.5) .. (209,75.5) .. controls (204.33,75.5) and (202,72.17) .. (202,65.5)(202,68.5) -- (202,18.5) .. controls (202,13.83) and (199.67,11.5) .. (195,11.5) ;
\draw  [fill={rgb, 255:red, 232; green, 249; blue, 227 }  ,fill opacity=1 ] (222,53.9) .. controls (222,46.22) and (228.22,40) .. (235.9,40) -- (381.1,40) .. controls (388.78,40) and (395,46.22) .. (395,53.9) -- (395,95.6) .. controls (395,103.28) and (388.78,109.5) .. (381.1,109.5) -- (235.9,109.5) .. controls (228.22,109.5) and (222,103.28) .. (222,95.6) -- cycle ;
\draw   (196,208.5) .. controls (200.67,208.42) and (202.96,206.05) .. (202.88,201.38) -- (202.67,188.38) .. controls (202.56,181.71) and (204.83,178.34) .. (209.5,178.27) .. controls (204.83,178.34) and (202.45,175.05) .. (202.34,168.39)(202.39,171.38) -- (202.12,155.38) .. controls (202.04,150.71) and (199.67,148.42) .. (195.01,148.5) ;
\draw  [fill={rgb, 255:red, 232; green, 249; blue, 227 }  ,fill opacity=1 ] (222,157.9) .. controls (222,149.95) and (228.45,143.5) .. (236.4,143.5) -- (387.6,143.5) .. controls (395.55,143.5) and (402,149.95) .. (402,157.9) -- (402,201.1) .. controls (402,209.05) and (395.55,215.5) .. (387.6,215.5) -- (236.4,215.5) .. controls (228.45,215.5) and (222,209.05) .. (222,201.1) -- cycle ;
\draw   (415,359.5) .. controls (419.67,359.5) and (422,357.17) .. (422,352.5) -- (422,265.5) .. controls (422,258.83) and (424.33,255.5) .. (429,255.5) .. controls (424.33,255.5) and (422,252.17) .. (422,245.5)(422,248.5) -- (422,158.5) .. controls (422,153.83) and (419.67,151.5) .. (415,151.5) ;
\draw  [fill={rgb, 255:red, 232; green, 249; blue, 227 }  ,fill opacity=1 ] (440,233.1) .. controls (440,224.21) and (447.21,217) .. (456.1,217) -- (580.9,217) .. controls (589.79,217) and (597,224.21) .. (597,233.1) -- (597,281.4) .. controls (597,290.29) and (589.79,297.5) .. (580.9,297.5) -- (456.1,297.5) .. controls (447.21,297.5) and (440,290.29) .. (440,281.4) -- cycle ;
\draw  [color={rgb, 255:red, 0; green, 0; blue, 0 }  ,draw opacity=1 ][fill={rgb, 255:red, 230; green, 230; blue, 230 }  ,fill opacity=1 ] (59,-45.5) -- (199,-45.5) -- (199,-2) -- (59,-2) -- cycle ;
\draw  [fill={rgb, 255:red, 230; green, 230; blue, 230 }  ,fill opacity=1 ] (219,-43.5) -- (607,-43.5) -- (607,-2) -- (219,-2) -- cycle ;

\draw (102,22) node [anchor=north west][inner sep=0.75pt]   [align=left] {\begin{minipage}[lt]{36.18pt}\setlength\topsep{0pt}
Type 1,
\begin{center}
$\displaystyle R^{N}$
\end{center}

\end{minipage}};
\draw (102,93) node [anchor=north west][inner sep=0.75pt]   [align=left] {\begin{minipage}[lt]{36.18pt}\setlength\topsep{0pt}
\begin{center}
Type 2,\\$\displaystyle R^{A}$
\end{center}

\end{minipage}};
\draw (102,163) node [anchor=north west][inner sep=0.75pt]   [align=left] {\begin{minipage}[lt]{36.18pt}\setlength\topsep{0pt}
\begin{center}
Type 3,\\$\displaystyle dR^{A}$
\end{center}

\end{minipage}};
\draw (102,231) node [anchor=north west][inner sep=0.75pt]   [align=left] {\begin{minipage}[lt]{36.18pt}\setlength\topsep{0pt}
\begin{center}
Type 4,\\$\displaystyle dR^{N}$
\end{center}

\end{minipage}};
\draw (93,295) node [anchor=north west][inner sep=0.75pt]   [align=left] {\begin{minipage}[lt]{52.05pt}\setlength\topsep{0pt}
\begin{center}
General \\equilibrium\\effects
\end{center}

\end{minipage}};
\draw (21.67,132.52) node [anchor=north west][inner sep=0.75pt]  [rotate=-269.83] [align=left] {Initial equilibrium};
\draw (4.67,307.97) node [anchor=north west][inner sep=0.75pt]  [rotate=-270.23] [align=left] {\begin{minipage}[lt]{80.23pt}\setlength\topsep{0pt}
\begin{center}
Induced by offset\\price change
\end{center}

\end{minipage}};
\draw (69,-33) node [anchor=north west][inner sep=0.75pt]   [align=left] {\begin{minipage}[lt]{82.65pt}\setlength\topsep{0pt}
\begin{center}
Emissions source
\end{center}

\end{minipage}};
\draw (360,-31) node [anchor=north west][inner sep=0.75pt]   [align=left] {Measured by};
\draw (232,54) node [anchor=north west][inner sep=0.75pt]   [align=left] {\begin{minipage}[lt]{106.68pt}\setlength\topsep{0pt}
\begin{center}
Traditional additionality\\screening tools
\end{center}

\end{minipage}};
\draw (247,151) node [anchor=north west][inner sep=0.75pt]   [align=left] {\begin{minipage}[lt]{88.92pt}\setlength\topsep{0pt}
\begin{center}
Conventional\\carbon accounting,\\$\displaystyle ( \Omega _{C})$
\end{center}

\end{minipage}};
\draw (456,228) node [anchor=north west][inner sep=0.75pt]   [align=left] {\begin{minipage}[lt]{88.92pt}\setlength\topsep{0pt}
\begin{center}
Aggregate\\carbon accounting,\\$\displaystyle ( \Omega _{A}$)
\end{center}

\end{minipage}};

\end{tikzpicture}
}
\caption{Relationships between offset types and carbon accounting metrics.}\label{fig:taxonomy}
\end{figure}


\section{Welfare} \label{Sec:Welfare}

\subsection{Change in Welfare}

In this section, we derive a welfare formula that shows the trade-off between utility from consumption and damages from emissions. With a large number of households, $n$, each household does not consider their own impact on aggregate emissions ($nZ$), and we assume that aggregate emissions enter separably into utility. The household receives direct utility from the consumer of good $X$ with the the representative consumer's utility function given by $U = U(X; nZ)$. We assume that $U$ is continuous, quasiconcave, and twice-differentiable such that $U$ is strictly increasing in $X$ and strictly decreasing in $nZ$. Aggregate emissions are set apart in the utility function with a semicolon to indicate that the level of aggregate emissions is not a choice variable; that is, households cannot optimize over $nZ$.

As shown in Appendix \ref{App:Welfare}, the change in welfare from an exogenous policy change is found to be:
\begin{align}
    \frac{dU}{\lambda I} = \underbrace{\hat X}_{\text{Marginal Cost}} - \underbrace{\mu \left( \frac{nZ}{I} \right) \hat Z}_{\text{Marginal Benefit}}, \label{Eq:Welfare}
\end{align}
where $\lambda$ is the marginal utility of income (and recalling that $I$ is total income, or GDP), and thus the left-hand side of equation \ref{Eq:Welfare} is the percent change in welfare using the money metric \citep{bovenberg1994environmental}. We define $\mu \equiv \frac{-1}{\lambda}\frac{\partial U}{\partial Z}>0$ as the marginal environmental damage (MED) from a change in $Z$. The MED is positive since a binding budget constraint implies $\lambda>0$ and higher pollution reduces utility ($\frac{\partial U}{\partial Z}<0$).

In a standard model where a negative externality is controlled by an existing tax, $\tau$, the welfare equation differs from our equation \ref{Eq:Welfare} in two important ways. First, final consumption goods $(e.g., \hat X)$ do not appear in the welfare equation as the existing tax internalizes the cost. Second, welfare terms with respect to pollution changes take the form $(\tau-\mu)\frac{nZ}{I}\hat Z$ such that a decrease in pollution $\hat Z<0$ increases welfare when the tax is set lower than the MED (meaning the initial $\tau$ may not be set optimally). As a result, welfare can increase or decrease depending on whether the pollutant is over- or under-regulated, and welfare is maximized when $\tau=\mu$ (i.e., a Pigouvian tax). That is, $\tau$ is the marginal cost of reducing emissions while $\mu$ is the marginal benefit, and maximizing welfare requires setting the marginal benefit equal to the marginal cost.

Our model does not have an explicit cost to producers for pollution, so a fixed parameter similar to $\tau$ does not appear in equation \ref{Eq:Welfare}. However, the final consumption good $\hat X$ appears in the welfare equation, and decreasing final good consumption is the indirect cost of subsidizing renewable energy production via offsets (and below we show $\hat X<0 $, always). This means $\hat X$ is the marginal cost of pollution reduction as labeled in equation \ref{Eq:Welfare}, and this marginal cost is not fixed, but a function of economic parameters such as the factor shares and production elasticities, and arises due to the distortion from changing relative prices induced by the offset payments. On the other hand, $\mu \left( \frac{nZ}{I} \right) \hat Z$ is the marginal benefit of carbon reductions. The marginal benefit can be negative in cases of backfire, $\hat{Z}>0$, when subsidizing clean energy increases total emissions.

Finally, since $\mu$ is a free parameter in this model, then for any $\hat Z<0$ there exists a $\mu^*$ such that the welfare gain is non-negative meaning the marginal benefit of subsidizing clean energy is greater than or equal to the marginal cost. Formally, when $\hat{Z}<0$, define $\mu^*$ as follows:
\begin{align}
    \mu^* \equiv  \left( \frac{I}{nZ} \right) \frac{\hat X}{\hat Z}, \label{Eq:MuStar}
\end{align}
such that $\mu^*$ is increasing in income $(I)$ as well as the loss of final good consumption $(\hat X)$, but decreasing in the initial amount of emissions $(Z)$ and aggregate carbon emission reductions $(\hat Z)$. As the closed-form solutions in Section \ref{Sec:Analytical} make clear, $\mu^*$ is not a function of policy change since $\hat s>0$ cancels out from the numerator and denominator of equation \ref{Eq:MuStar}; rather, $\mu^*$ is a function of the economic parameters and initial policy conditions $(s,t)$. Also, $\mu^*$ is not defined when $\hat{Z}>0$.

\subsection{MVPF}\label{Sec:MVPF}

The marginal value of public funds (MVPF) is a benefit-cost ratio for comparing outcomes across policies, and it is defined as the ratio of the benefits to individuals divided by the net cost to government \citep{bastani2024marginal}. A larger MVPF means that a spending policy has more ``bang-for-buck''. In the context of public goods, \cite{hahn2024welfare} provides a framework for evaluating the MVPF while emphasizing the importance of externalities as well as (potential) rebound and general equilibrium effects. Our model fully captures both the rebound and general equilibrium effects from a subsidy-based offset policy. In this section, we translate \cite{hahn2024welfare}'s framework to our setting to evaluate how the MVPF relates to the carbon accounting metrics defined in Section \ref{Sec:Additionality}. 

The MVPF's numerator is the willingness to pay (WTP) for a small increase in the offset price $(ds)$:
\begin{align*}
    WTP =\underbrace{Rds}_{\text{Offset Value}}+\underbrace{VdR}_{\text{Externality Value}},
\end{align*}
where $V$ is the marginal externality value per unit of $R$. The WTP is then comprised of the dollar value of the offset value and the dollar value of the externality. Rewriting the above equation in our ``hat'' notation leads to: $WTP =(sR)\hat{s}+(VR)\hat{R}$. The value of $V$ must account for all environmental benefits from a change in $R$ and not just the direct benefits from offsetting $F$. In other words, $V$ is related to the aggregate carbon accounting metric, defined as $\Omega_A\equiv\xi_FdF+\xi_XdK_X$. As shown below, $dF$ and $dK_X$ can be rewritten in terms of $dR$, so the aggregate carbon accounting metric can be redefined as $\Omega_A=-\zeta dR$, where $\zeta$ is a collection of parameters that includes the emission factors $(\xi_F,\xi_X)$ as well as elasticity and share parameters.\footnote{Alternatively, one can calculate $\zeta$ as follows: $\zeta = \frac{Z\hat{Z}}{R\hat{R}}=\frac{dZ}{dR}$.}
Then, the dollar value per unit for a change in $R$ is given by $V=\mu\zeta$, and since $\Omega_A$ is negative when emissions reductions occur then $\zeta>0$ (recalling $\mu>0$ is the MED). Substituting then finds:
\begin{align}
    WTP =(sR)\hat{s}+(\mu\zeta R)\hat{R}, \label{Eq:WTP}
\end{align}
and the closed-form expression for $\hat{R}$ (found in Section \ref{Sec:Analytical}) can be substituted into \ref{Eq:WTP} to recover WTP as a function of the exogenous policy variable $\hat{s}$ only.

The MVPF's denominator is the net cost to the government given by $Cost=Rds+sdR$, where $dR$ captures all rebound and general equilibrium effects. Again, switching to our hat notation yields:
\begin{align}
    Cost=sR(\hat{s}+\hat{R}).
\end{align}
Therefore, the MVPF in our setting is found to be:
\begin{align}
    MVPF=\frac{WTP}{Cost}=\frac{(sR)\hat{s}+(\mu\zeta R)\hat{R}}{sR(\hat{s}+\hat{R})}=\frac{1+(\mu\zeta/s)\epsilon}{1+\epsilon},\label{Soln:MVPF}
\end{align}
where $\epsilon=\hat{R}/\hat{s}$ is the general equilibrium elasticity of  renewable energy $(R)$ with respect to the offset price~$(s)$. The parameter value $\epsilon$ is implicitly defined in equation \ref{Soln:R} below. We find $\epsilon>0$, which means that an increase in the offset price always increases $R$ and implies that $\textrm{MVPF}>0$. Then, a policy is social welfare improving when $\textrm{MVPF}>1$ since the willingness to pay exceeds the cost on the margin. The necessary and sufficient condition for $\textrm{MVPF}>1$ is $|\mu\zeta|>s$. This condition means that the dollar value of the total externality benefit per unit of $R$ is greater than the per unit offset price. The MVPF is increasing in $\mu$ and $\zeta$; that is, as the environmental damage increases, or the aggregate emission offset increases, then the value of the offset program increases. Conversely, the MVPF decreases as the initial offset price increases, all else equal.

\subsection{Optimal Policy}\label{Sec:Optimal}

The model in Section \ref{Sec:Model} solves for the change in prices and quantities from one equilibrium to another from a change in the prevailing offset price. What is the optimal policy? Regardless of the complexities induces by an explicit clean energy sector with a fixed-factor of production, an optimal policy is straightforward: set a Pigouvian tax on emissions, $(t_Z)$, equal to the marginal environmental damage. That is, set $t_Z=-n\mu>0$ where $t_Z$ is a per unit tax on $Z=Z_F+Z_X$. As a corollary, the single instrument of a subsidy on $R$ alone cannot induce the socially optimal outcome. We check whether a subsidy on $R$ paired with an output tax recovers the first-best solution (in other words, we search for an optimal two-part instrument like in \cite{fullerton1997case}). We find that a two-part instrument consisting of a subsidy on $R$ and an output tax on $X$ cannot achieve the socially optimal outcome except in the special case where $\xi_F=\xi_X$. Only under the special case where $\xi_X=0$ (no externality in the final good sector; emissions only come from $F$), do we find that a two-part instrument consisting of a subsidy on $R$ and a tax on $E$ can achieve first best. See Supplemental (Online) Appendix for proofs of these results.
 

\section{Analytical Solutions} \label{Sec:Analytical}

Solving the model detailed in Section \ref{Sec:Model} yields the following solutions for $\hat R$ and $\hat K_R$ (see Supplemental Appendix for details):
\begin{align}
    \hat{R}{} = \theta_{RK}\hat{K}_R = \left[ \frac{\gamma\theta_{RK}}{\left( \theta_{RQ}/\sigma_R \right) + \left( AB/C\right) } \right] \hat{s} = \left( \frac{\gamma\theta_{RK}}{D} \right) \hat{s} = \epsilon\hat{s}, \label{Soln:R}
\end{align}
where $A$ and $B$ are a positive collection of share parameters defined by $A\equiv (1-\gamma(1-\theta_{XR}))>0$ and $B \equiv \left[ \theta_{RK}+\alpha_R\theta_{RQ} \right]>0$, respectively. Define $C \equiv \left[ \alpha_X\theta_{ER}\sigma_X+(\alpha_F+\alpha_X\theta_{ER})\sigma_E \right] \geq0$ as a non-negative collection of share and elasticity parameters. Finally, let $D$ denote the entire denominator to simplify notation. All parameters and collections are non-negative. Thus, $\hat R > 0$ when $\hat{s}>0$. In words, increasing the offset price paid to the clean sector increases the quantity of $R$. The parameter $\epsilon$ (in the final equality) is employed to evaluate the MVPF (as defined in equation \ref{Soln:MVPF} above).\footnote{Note that $\hat{R}$ is only defined when (i) $\sigma_R\neq0$ and (ii) both $\sigma_X\neq0$ and $\sigma_E\neq0$. If either condition does not hold, then $D$ is undefined and thus $\hat{R}$ is undefined. The condition $\sigma_R=0$ cannot hold because that implies $K_R$ and $\bar{Q}$ are perfect complements, but $\bar{Q}$ is fixed and thus $R$ would be fixed too, so the change in $R$ is undefined. The condition $\sigma_X=\sigma_E=0$ cannot hold because the rest of the economy has no flexibility to adjust to changes in $R$ via substitution (recalling the representative consumer only gets consumption utility from $X$).}

The elasticities of substitution play a key role in determining the value of $\hat R$. As the elasticities become small, $\hat R$ falls. Specifically, $\lim_{\sigma_i\to0}\ \hat{R}\to0 \ \forall i \in \{E,F,X\}$. Intuitively, if the production sectors cannot adjust in response to a change in the clean energy incentive, then the amount of clean energy cannot change. However, the converse does not hold as $\hat R$ is finite when any particular production sector tends towards perfectly substitutable inputs. In Section \ref{Sec:Numeric}, we numerically calculate $\hat R$ and other endogenous variables as a function of the elasticities.  

The signs of comparative statics for $\hat R$ with respect to the elasticities of substitution are all positive: $\frac{\partial \hat R}{\partial \sigma_R},\frac{\partial \hat R}{\partial \sigma_X},\frac{\partial \hat R}{\partial \sigma_E}>0$. These comparative statics are interpreted as follows. To start, a more elastic substitution in the production of clean energy $(\sigma_R \uparrow)$ means that production can shift away from the fixed factor, and attract capital, to increase renewable energy output. Next, a more flexible energy aggregation $(\sigma_E \uparrow)$ allows substitution towards clean energy and away from dirty energy without losing overall energy output. Finally, a more flexible final goods production sector $(\sigma_X \uparrow)$ means that production can utilize the extra energy. Therefore, more flexible production functions lead to larger increases in $R$, all else equal.

Recall that, under the assumption that the additional and non-additional sub-sectors have the same cost function, it is the case that $\hat{R} = \hat{R}^A=\hat{R}^N$ (see Section~\ref{Sec:Model} and Appendix~\ref{App:Non-Additional} for details). That is, the two sub-sectors have the same proportional change when the offset price changes. Thus, it follows that $\hat{R}^A$ and $\hat{R}^N$ have the same functional form outlined in equation~\ref{Soln:R}, and follow the same comparative statics.

Next, the change in fossil energy production is given by:
\begin{align}
    \hat F = \hat K_F = \left[ 1-\sigma_E \left( \frac{B}{\theta_{RK} C} \right) \right] \hat R 
    = \left[ 1-\sigma_E \left( \frac{B}{\theta_{RK} C} \right) \right] \left( \frac{\gamma\theta_{RK}}{D} \right) \hat s, \label{Soln:F} 
\end{align}
where we present $\hat F$ recursively as a function of $\hat R$ and in a closed-form solution. The sign of $\hat F$ is generally ambiguous since the bracketed term has ambiguous sign. That is, an increase in the offset price can \textit{either} increase or decrease fossil use depending on parameter values. To provide clarity on the sign of $\hat{F}$, Proposition \ref{Prop:FhatSign} identifies an if and only if condition where $\hat{s}>0$ implies $\hat{F}>0$. Generally speaking, the condition arises when $\sigma_X>>\sigma_E$ (noting $\sigma_X$ is in $C$); that is, the final goods sector is much more flexible regarding its inputs than the energy sector. We show such cases numerically in Section \ref{Sec:Numeric} where $\hat{F}>0$ and these cases relate to the issue of backfire, where increasing the offset price increases total emissions. Conversely, if $\sigma_E$ is large enough, then the offset induces sector $E$ to aggressively substitute towards $R$ and away from $F$, pushing the economy towards a reduction in total emissions.

\begin{proposition}[Sign of $\hat{F}$]
For $\hat{s}>0$, if $\sigma_R\neq0$ and $\sigma_X\neq0$ (implying $C\neq0)$, then there exist values of $\sigma_E>0$ such that $\hat{F}>0$. Furthermore, the range of $\sigma_E$ values leading to $\hat{F}>0$ is given by $\frac{\theta_{RK}C}{B}>\sigma_E\geq0$ such that $\hat{F}<0$ if and only if $\sigma_E>\frac{\theta_{RK}C}{B}>0.$ \label{Prop:FhatSign}
\end{proposition}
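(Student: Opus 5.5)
We take the closed-form solution in equation \ref{Soln:F} as given and work directly with it. Write $\hat F = \left[ 1-\sigma_E \frac{B}{\theta_{RK} C} \right] \hat R$, where by equation \ref{Soln:R} we have $\hat R = (\gamma\theta_{RK}/D)\hat s$. The first step is to confirm that $\hat R>0$ whenever $\hat s>0$ under the stated hypotheses. If $\sigma_R\neq 0$ the term $\theta_{RQ}/\sigma_R$ is finite and nonnegative. If $\sigma_X\neq 0$ then $C\geq \alpha_X\theta_{ER}\sigma_X>0$, since the initial equilibrium has $K_X,R>0$ and hence $\alpha_X,\theta_{ER}>0$. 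Thus $AB/C$ is finite and positive, because $A,B>0$. So $D$ is a finite positive number, and together with $\gamma\theta_{RK}>0$ (initial $s>0$, $K_R>0$) this gives $\hat R>0$. Consequently $\operatorname{sign}(\hat F)=\operatorname{sign}\left(1-\sigma_E B/(\theta_{RK}C)\right)$. This reduction is the core of the argument.

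The second step is to analyze the bracket as a function of $\sigma_E$. The complication is that $C=\alpha_X\theta_{ER}\sigma_X+(\alpha_F+\alpha_X\theta_{ER})\sigma_E$ itself depends on $\sigma_E$, so the bracket is not simply linear. We therefore multiply through by the positive quantity $\theta_{RK}C$. This gives $\operatorname{sign}(\hat F)=\operatorname{sign}(\theta_{RK}C-\sigma_E B)$, and we read off the equivalence: $\hat F<0 \iff \sigma_E>\theta_{RK}C/B$, and $\hat F> 0 \iff \sigma_E<\theta_{RK}C/B$ (with $\hat F=0$ at equality). Since $B>0$ and $C>0$, the threshold $\theta_{RK}C/B$ is strictly positive, which handles the "$>0$" claim in the statement.

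For the existence claim, it is cleaner to make the threshold explicit in $\sigma_E$, because $C$ is affine in it. The inequality $\theta_{RK}C>\sigma_E B$ becomes
\begin{align*}
\theta_{RK}\alpha_X\theta_{ER}\sigma_X > \sigma_E\left[B-\theta_{RK}(\alpha_F+\alpha_X\theta_{ER})\right].
\end{align*}
The left side is strictly positive. If the bracket on the right is nonpositive, every $\sigma_E\geq0$ gives $\hat F>0$. Otherwise $\hat F>0$ exactly for $0\le\sigma_E<\theta_{RK}\alpha_X\theta_{ER}\sigma_X/[B-\theta_{RK}(\alpha_F+\alpha_X\theta_{ER})]$, which is a nonempty interval of positive values. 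Either way, positive $\sigma_E$ with $\hat F>0$ exist; in particular $\sigma_E$ small enough always works, by continuity at $\sigma_E=0$ where the bracket equals $1$.

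The main obstacle is this self-referential dependence of the threshold $\theta_{RK}C/B$ on $\sigma_E$ through $C$. Stated as in the proposition, the "iff" is a tautological rearrangement and is harmless. The existence of a genuine cutoff above which $\hat F<0$, however, depends on the sign of $B-\theta_{RK}(\alpha_F+\alpha_X\theta_{ER})$. To settle this, I would substitute $B=\theta_{RK}+\alpha_R\theta_{RQ}$ and $1=\alpha_X+\alpha_R+\alpha_F$. This yields $B-\theta_{RK}(\alpha_F+\alpha_X\theta_{ER})=\theta_{RK}(\alpha_R+\alpha_X\theta_{EF})+\alpha_R\theta_{RQ}>0$, so the cutoff is a finite positive number. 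I would state this explicit form alongside the proposition's implicit one.
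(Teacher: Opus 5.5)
Your argument is correct, and its core matches the paper's proof, which says only that the result ``follows from Equation~\ref{Soln:F}'': because $\hat R>0$, the sign of $\hat F$ is the sign of $1-\sigma_E B/(\theta_{RK}C)$.

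You go further than the paper by noticing that $C$ grows with $\sigma_E$, so the stated threshold $\theta_{RK}C/B$ is only implicit. Clearing denominators and checking that the residual coefficient on $\sigma_E$ is positive turns the ``if and only if'' into a genuine finite cutoff. That step is exactly what justifies the paper's informal claim that a large enough $\sigma_E$ yields $\hat F<0$; the paper's one-line proof never checks it.

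One caution about that explicit step. Solving the log-linear system directly gives the $\sigma_E$-coefficient in $C$ as $\alpha_F+\alpha_X\theta_{EF}$, not the printed $\alpha_F+\alpha_X\theta_{ER}$. The reason is that $\sigma_E$ enters the resource constraint in two places: through $\alpha_F\hat F$, and through $\alpha_X\hat K_X$ via $\hat E=\theta_{EF}\hat F+\theta_{ER}\hat R$. Only the $\theta_{EF}$ version reproduces Table~\ref{Tab:Elasticity}, e.g.\ $\hat R=0.694$ and $\hat F=-0.244$ in row 4.

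With the corrected coefficient, your residual becomes $\theta_{RK}(\alpha_R+\alpha_X\theta_{ER})+\alpha_R\theta_{RQ}>0$, so your conclusion is unaffected. The explicit cutoff you propose to state should, however, read
\[
\sigma_E^{*}=\frac{\theta_{RK}\alpha_X\theta_{ER}\,\sigma_X}{\theta_{RK}(\alpha_R+\alpha_X\theta_{ER})+\alpha_R\theta_{RQ}}<\sigma_X .
\]
This also shows that $\hat F>0$ is possible only when $\sigma_E<\sigma_X$, which makes precise the paper's ``$\sigma_X\gg\sigma_E$'' heuristic.
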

\begin{proof}
    Follows from Equation~\ref{Soln:F}.
\end{proof}

Continuing, the change in total energy production is given by:
\begin{align}
    \hat E = \left[ 1-\sigma_E \left( \frac{B}{\theta_{RK} C} \right) \theta_{EF} \right] \hat R, \label{Soln:E}
\end{align}
and we omit explicitly substituting for $\hat R$ going forward. Since $\hat F$ is generally ambiguous, then $\hat E$ is generally ambiguous as the production function for energy is given by $E=E(F,R)$. The comparative statics with respect to the elasticities for $\hat E$ are scaled versions those for $\hat F$, and thus generally ambiguous as well.

The changes in the final goods production sector are found to be:
\begin{align}
    \hat K_X =& \left( \frac{-1}{\alpha_X\theta_{RK}} \right) \left[ \alpha_R + \alpha_F\theta_{RK}\left( 1-\sigma_E \left( \frac{B}{\theta_{RK} C} \right) \right) \right] \hat R \label{Soln:KX}, \\
    \hat X = & \left[ \theta_{XE} - \left( \frac{\alpha_R+\alpha_F\theta_{RK}}{\alpha_X\theta_{RK}} \right) \theta_{XK} \right] \hat R = -\left(\frac{sR}{I} \right) \hat R \label{Soln:X}.
\end{align}
For $\hat K_X$, if the bracketed term is positive, then an increase in the offset price (and subsequently $\hat R$) leads to a decrease in capital use for final goods production due to the resource constraint in equation \ref{Eq:Resource} and additional capital input use in sector $R$ (as well as sector $F$, if $\hat F>0$). 

As shown in Section \ref{Sec:Welfare}, $\hat X$ is the marginal cost of the offset policy and thus the sign of $\hat X$ is important to overall welfare. At first, it appears the sign of $\hat X$ is ambiguous given the bracketed term in \ref{Soln:X}, but it is not. Rather, bracketed terms simplifies to $-sR/I$, where $sR$ is the total offset payment and $I$ is total income (equal to GDP). Therefore, $\hat{X}$ is proportional to the change in $R$ but scaled by the share of offset payments in the economy.
\begin{proposition}[Sign of $\hat{X}$]
If $\hat{s}>0$, then $\hat{X}<0$. \label{Prop:XhatSign}
\end{proposition}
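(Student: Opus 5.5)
The proof has two parts. First, I establish the identity in equation \ref{Soln:X}, namely that the bracketed coefficient on $\hat R$ equals $-sR/I$. Second, I combine it with the sign of $\hat R$ from equation \ref{Soln:R}. Since $\hat R=\epsilon\hat s$ with $\epsilon=\gamma\theta_{RK}/D$, and since $\gamma>0$ (an initial $s>0$ is assumed), $\theta_{RK}>0$, and $D>0$ whenever $\hat R$ is defined ($\theta_{RQ}/\sigma_R\ge 0$, $A,B>0$, $C>0$), we have $\hat R>0$ for $\hat s>0$. Given $s>0$, $R>0$ and $I>0$, the identity then yields $\hat X=-(sR/I)\hat R<0$. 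So the substance lies entirely in the identity.

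To prove the identity, I avoid the closed forms and work from the primitive log-linear equations. Start from the production function \ref{Eq:ProductionX}, $\hat X=\theta_{XK}\hat K_X+\theta_{XE}\hat E$, and substitute $\hat E=\theta_{EF}\hat F+\theta_{ER}\hat R$ to obtain $\hat X=\theta_{XK}\hat K_X+\theta_{XF}\hat F+\theta_{XR}\hat R$. Next, use $\hat F=\hat K_F$ and the resource constraint \ref{Eq:Resource} to write $\hat K_X=-(\alpha_R\hat K_R+\alpha_F\hat K_F)/\alpha_X$, with $\hat K_R=\hat R/\theta_{RK}$. The result is a linear expression in $\hat R$ and $\hat F$. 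The key observation is that the coefficient on $\hat F$ vanishes: $\theta_{XF}-\theta_{XK}\alpha_F/\alpha_X=0$. This holds because $\theta_{XF}=P_FF/P_XX=(1+t)P_KK_F/X$ by zero profit in $F$, and $\theta_{XK}\alpha_F/\alpha_X=(1+t)P_KK_X/X\cdot K_F/K_X$, which is the same quantity. Consequently $\hat X$ depends only on $\hat R$, which explains why the ambiguous $\sigma_E$ term drops out, in contrast to \ref{Soln:KX}.

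It remains to evaluate the $\hat R$ coefficient, $\theta_{XR}-\theta_{XK}\alpha_R/(\alpha_X\theta_{RK})$, in levels with $P_X=1$. We have $\theta_{XR}=P_RR/X$, and
\[
\theta_{XK}\frac{\alpha_R}{\alpha_X\theta_{RK}}=\frac{(1+t)P_KK_X}{X}\cdot\frac{K_R}{K_X}\cdot\frac{(P_R+s)R}{(1+t)P_KK_R}=\frac{(P_R+s)R}{X}.
\]
The difference is therefore $-sR/X$, and $X=I$ by the household budget constraint with $B=0$ (Appendix \ref{App:LogLinear}). This confirms the bracket in \ref{Soln:X}; the version printed there, written with $\theta_{XE}$ and $(\alpha_R+\alpha_F\theta_{RK})$, reduces to the same expression once the $\hat F$ cancellation is applied.

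The step I expect to be the main obstacle is ensuring that the level definitions of the shares are mutually consistent, so that the cancellations are exact identities rather than approximations. In particular, the tax factor $(1+t)$ must appear identically in $\theta_{XK}$, $\theta_{RK}$ and the $F$ zero-profit condition. The other delicate point is justifying $D>0$ (hence $\epsilon>0$), which requires the nondegeneracy conditions stated in the footnote to \ref{Soln:R}.
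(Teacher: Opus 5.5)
Your proof is correct and follows the paper's argument. The paper's proof is simply that $\hat R>0$, combined with the identity $\hat X=-(sR/I)\hat R$ in equation~\ref{Soln:X}, yields $\hat X<0$. Your derivation of that identity from the primitive log-linear equations fills in the simplification the paper only asserts in the text and defers to its supplemental appendix: the $\hat F$ coefficient $\theta_{XF}-\theta_{XK}\alpha_F/\alpha_X$ vanishes by the zero-profit condition in $F$, and the remaining subsidy wedge $-sR/X$ equals $-sR/I$ since $X=I$.
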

\begin{proof}
    Follows from $\hat{R}>0$ and Equation~\ref{Soln:X}.
\end{proof}

\noindent Proposition \ref{Prop:XhatSign} means there is always a cost of the offset policy (i.e., no free lunch). This result holds even with as-if lump-sum taxes that fund the offset payments. In contrast, the marginal benefit of an offset payment can be negative (as discussed next).

Finally, the change in total emissions is $\hat Z = \rho_F \hat F + \rho_X \hat K_X$ and thus can be recursively written:
\begin{align}
    \hat Z = \rho_F \left[ 1-\sigma_E \left( \frac{B}{\theta_{RK} C} \right) \right] \hat R + \rho_X \left( \frac{-1}{\alpha_X\theta_{RK}} \right) \left[ \alpha_R + \alpha_F\theta_{RK}\left( 1-\sigma_E \left( \frac{B}{\theta_{RK} C} \right) \right) \right] \hat R. \label{Soln:Z}
\end{align}
The components of $\hat Z$ have ambiguous signs and thus $\hat Z$ does too since the emission shares $(\rho_F,\rho_R)$ are independent of other parameter values. Therefore, the marginal benefit of the offset policy has an ambiguous sign.
In the next section, we numerically demonstrate cases where $\hat{Z}$ is positive meaning the marginal benefit is negative, and in such cases welfare is unambiguously negative since $\hat{X}<0$. Finally, we note that the sign of $\hat{Z}$ is ambiguous even in the case where $\xi_X=0$ (that is, the case where there are no emissions from Sector~$X$); this is due to the ambiguous sign of $\hat{F}$, as demonstrated in Proposition~\ref{Prop:FhatSign}.

In this model, backfire $(\hat{Z}>0)$ can occur via two different qualitative paths due to the two channels of possible emission changes $(\hat{F},\hat{K}_X)$. In contrast, partial equilibrium models typically have only one channel for backfire ($\hat{F}>0$). Proposition~\ref{Prop:Backfire} catalogs the two paths. Condition (i) is similar to partial equilibrium backfire as it does not rely on the general equilibrium effects in Sector $X$. In Section~\ref{Sec:Numeric}, Table~\ref{Tab:Elasticity} finds emissions backfire under condition (i) using parameter values from the U.S. economy. Then, condition (ii) arises under a narrow set of parameter values, since backfire in this parameterization is completely driven by general equilibrium backfire  $(\hat{K}_X>0)$. Proposition~\ref{Prop:Backfire} provides a necessary condition for each backfire condition. From the equation $\hat Z = \rho_F \hat F + \rho_X \hat K_X$, it seems like there might be a third condition leading to backfire with $\hat{F}>0$ and $\hat{K}_X>0$. However, $\hat{F}>0$ implies $\hat{K}_X<0$ as documented in Proposition~\ref{Prop:Backfire}. That is, partial equilibrium backfire cannot be supported by general equilibrium backfire because increasing $F$ (and $R$) leaves fewer real resources for sector $X$. 

\begin{proposition}[Backfire] \label{Prop:Backfire}
If $\hat{Z}>0$, then one of the following conditions must hold: (i) $\hat{F}>0$ and $\hat{K}_X\leq0$; or, (ii) $\hat{F}\leq0$ and $\hat{K}_X>0$. Define $\left[ 1-\sigma_E \left( \frac{B}{\theta_{RK} C} \right) \right]\equiv\Theta $, and then a necessary condition for each backfire condition is as follows:

(i) $\Theta>0$

(ii) $\Theta<0$ and $\alpha_F\theta_{RK}|\Theta|>\alpha_R$.

\noindent Also, $\hat{F}>0$ and $\hat{K}_X>0$ cannot hold as a path for backfire.
\end{proposition}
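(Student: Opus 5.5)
The plan works directly from the closed forms in equations \ref{Soln:F} and \ref{Soln:KX}, with $\hat R>0$ for $\hat s>0$ taken as given from equation \ref{Soln:R}. Write $\hat F=\Theta\hat R$ and
\begin{align*}
\hat K_X=-\frac{1}{\alpha_X\theta_{RK}}\left[\alpha_R+\alpha_F\theta_{RK}\Theta\right]\hat R .
\end{align*}
Every statement in Proposition~\ref{Prop:Backfire} then reduces to a sign question about the two scalars $\Theta$ and $\alpha_R+\alpha_F\theta_{RK}\Theta$, because $\hat R>0$ and $\alpha_X,\theta_{RK}>0$.

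\textbf{Step 1: the two backfire paths.} Since $\hat Z=\rho_F\hat F+\rho_X\hat K_X$ with $\rho_F,\rho_X\geq0$, the conclusion $\hat Z>0$ forces at least one of $\hat F,\hat K_X$ to be strictly positive; otherwise $\hat Z\leq0$. This leaves three possibilities: $\hat F>0,\hat K_X\leq0$; $\hat F\leq0,\hat K_X>0$; and both strictly positive. The task is to rule out the third.

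\textbf{Step 2: excluding the third case.} Suppose $\hat F>0$. Then $\Theta>0$, so $\alpha_R+\alpha_F\theta_{RK}\Theta>0$, since $\alpha_R>0$ (there is initial renewable capital). Hence $\hat K_X<0$. This rules out $\hat F>0$ together with $\hat K_X>0$. It also shows that in case (i) the inequality $\hat K_X\leq0$ holds automatically, indeed strictly.

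\textbf{Step 3: necessary conditions.} For (i), $\hat F>0$ together with $\hat R>0$ gives $\Theta>0$ directly. For (ii), $\hat K_X>0$ requires $\alpha_R+\alpha_F\theta_{RK}\Theta<0$, that is $\alpha_F\theta_{RK}\Theta<-\alpha_R<0$. This forces $\Theta<0$, and then the inequality becomes $\alpha_F\theta_{RK}|\Theta|>\alpha_R$. Note that $\hat F\leq0$ alone would only give $\Theta\leq0$. The strict inequality $\Theta<0$ comes from the $\hat K_X$ condition, so I will derive it that way.

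\textbf{Main obstacle.} The only delicate points are bookkeeping. First, I must check that $\hat R>0$ holds strictly; this needs $D$ finite and positive, which follows from the footnote's conditions $\sigma_R\neq0$ and $C\neq0$. Second, I must make sure $\Theta$ is well defined, which requires $C\neq0$. Third, I must handle the degenerate case $\rho_X=0$ or $\rho_F=0$ in Step 1, which is harmless because it only removes one path. Beyond these checks, the argument is a direct sign analysis.
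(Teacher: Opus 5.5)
Your proposal is correct and follows essentially the same route as the paper: both get the two backfire paths and the necessary conditions by reading signs off $\hat Z=\rho_F\hat F+\rho_X\hat K_X$, $\hat F=\Theta\hat R$, and the closed form for $\hat K_X$, using $\hat R>0$. You exclude the joint case through the sign of $\alpha_R+\alpha_F\theta_{RK}\Theta$, while the paper argues directly that $\hat K_R>0$ and $\hat K_F>0$ force $\hat K_X<0$ in the resource constraint; these are the same argument, because the closed form for $\hat K_X$ is just that constraint with $\hat K_R=\hat R/\theta_{RK}$ and $\hat K_F=\Theta\hat R$ substituted in.
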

\begin{proof}
    Conditions (i) and (ii) follows from $\hat{R}>0$ and equation~\ref{Eq:Zhat} as well as equations~\ref{Soln:F} and~\ref{Soln:KX}. Recall the resource constraint $0=\alpha_X \hat K_X + \alpha_R \hat K_R + \alpha_F \hat K_F$ noting that $\hat{R}>0$ implies $\hat{K}_R>0$ and $\hat{F}>0$ implies $\hat{K}_F>0$, and therefore $\hat{K}_X<0$ when $\hat{F}>0$.
\end{proof}

The final condition of Proposition~\ref{Prop:Backfire} says that it cannot be the case that both $\hat{F}>0$ and $\hat{K}_X>0$, implying that the two sectors have a dampening effect on the backfire of the other. Intuitively, this dampening occurs because there are finite emissions-producing inputs in the economy, and increasing those inputs in one sector necessarily reduces them in another.  


\section{Numerical Analysis} \label{Sec:Numeric}

This section assigns parameter values for the price-quantity model and welfare formulas. The purpose of conducting the numerical analysis is to generate intuition about the model as well as highlight key features and results and how they change as the underlying economic parameters vary.

To parameterize the model and by way of an illustrative example, we start by setting GDP equal to U.S. nominal GDP of $\$29.3$ trillion in 2024 \citep{bea2026gdp}. Next, we set all initial prices in the economy to one and allocate capital across sectors. Again, by way of example, we set the capital shares as $K_X=0.92$, $K_F=0.05$, and $K_R=0.01$ such that $\bar K=0.99 $ (implying $\bar Q = 0.01$ and total resources sum to one). Thus, sector $X$ uses most of the primary input while the renewable energy sector $R$ uses the least. These shares are not identical to the real input shares for the U.S. economy in 2024, but they are qualitatively accurate and thus means the result here are illustrative of real economic outcomes from a hypothetical carbon offset program.

Next, the initial \textit{ad valorem} tax on capital (i.e., the primary input) is set at $t=0.005$. The tax on capital must be small since all revenue goes directly to the offset payment, sector $R$ is small relative to the size of the economy, and there are no other government expenditures in the model. Solving for the initial offset price finds $s=0.197$ meaning the subsidized price of clean energy is $P_R+s=1.197$, so nearly 20 percent higher than the price of good $X$ as the numeraire. All non-pollution share parameters can be recovered from these level values. 

Continuing, we set $\rho_F=0.299$ and $\rho_X=0.701$ using data from the \citet{USEIA2025co2} on emissions on U.S. $\textrm{CO}_2$ emissions by sector. Total U.S. $\textrm{CO}_2$ in 2024 were 4,772 million metric tons (Mt) where 1,427 Mt were emitted by the electricity sector. Since we model carbon offset focused on the electricity sector, then $F$ is fossil-fuel electricity and $\rho_F=  \frac{1427}{4772}=0.299$, and $\rho_X=1-\rho_F$. For $\mu$, we use a social cost of carbon of \$200 per ton, based on the U.S. EPA's previous estimate of \$190 per ton in 2020 dollars \citep{USEPA2022}. Unless otherwise specified, the numerical simulations increase the offset price by 10 percent $(\hat s=10)$ and set $\phi=1$ (i.e., all initial offsets are additional by traditional measures).\footnote{We choose $\phi=1$ for two reasons. First, limited data exists regarding the true additionality of offsets in the US economy. Second, setting $\phi=1$ is a conservative, easy-to-interpret assumption. We show that even under this assumption (that all offsets are additional by traditional measures), conventional carbon accounting still frequently overstates aggregate emissions changes.}

Table \ref{Tab:Elasticity} reports changes in the quantity outcomes as well as the additionality ratio $\Delta$ (from Section \ref{Sec:Additionality}) and the welfare outcomes (from Section \ref{Sec:Welfare}), including calculations of $\mu^*$, as the elasticities vary.\footnote{The specific elasticity values our numerical exercises are for illustrative purposes only. However, these values are qualitatively similar to elasticity values found in \cite{goulder2017confronting}'s Table 4.3 that reports central parameter values for their general equilibrium study. We differ by providing several values for each parameter to show the range of outcomes in our model.} 
Each row of the table reports results from a different simulation that varies the elasticity values. Several observations arise from the results. First, $\hat R>0$ in all rows as found in Equation \ref{Soln:R}. Second, the sign of $\hat F$ can change. In particular, $\hat{F}>0$ for $\sigma_X > >\sigma_E$ leading to backfire and an increase in emissions (see rows 13-18). For these rows, note that $\Delta < - 1$. Third, $\hat X <0$ always, and thus numerically validates Proposition~\ref{Prop:XhatSign}. Fourth, comparing Panels A and B shows that a higher $\sigma_E$ yields higher $\Delta$ values, better welfare outcomes, and higher MVPF, although the change in total welfare is negative in all cases where the MVPF is below one. This mirrors findings by \cite{hahn2024welfare} that subsidies directly displacing dirty energy production tend to have a higher MVPF---in our setting, higher $\sigma_E$ implies a higher degree of substitution between $R$ and $F$, meaning increases in $R$ will more effectively displace $F$. Fifth, $\mu^*$ is near \$200 in the initial rows of the table indicating that the baseline $\mu=\$200$ is close to yielding non-negative total welfare. Higher $\mu$ values would increase the welfare gains from $Z$ in rows where $\hat Z<0$. Sixth, $\mu^*$ does not vary with $\sigma_R$.\footnote{This result is not apparent from equation \ref{Eq:MuStar}, but arises after substituting the closed-form expression from Section \ref{Sec:Analytical}; specifically, $\sigma_R$ only appears directly in $\hat{R}$, but $\hat{R}$ cancels out in the ratio $\hat{X}/\hat{Z}$ using equations \ref{Soln:X} and \ref{Soln:Z}.} The intuition for this result is that varying $\sigma_R$ only impacts the partial equilibrium supply of $R$ but varying $\sigma_X$ and $\sigma_E$ affects the general equilibrium the supply and demand for $R$ as well as $X$ and $Z$. 
Seventh, the MVPF is always less than one since the change in welfare is always negative. Finally, the MVPF is monotonic with the welfare value since MVPF is a benefit-cost ratio of the offset policy. 
Additional figures showing how emissions, additionality, and welfare vary with $\sigma_E$ and $\sigma_X$ are in Appendix~\ref{App:Figures}.

\begin{table}[t!]
    \centering
    \caption{Numerical Results Varying the Production Elasticities} \vspace{1mm}
    \resizebox{\columnwidth}{!}{%
    \begin{tabular}{c|cc|ccccc|c|rrr|c|c}
    \hline \hline
     & \multicolumn{2}{c|}{Elasticities} & \multicolumn{5}{c|}{Quantities (\% Change)} & Add. & \multicolumn{3}{|c|}{Welfare (\$billion)} & (\$) & (Index) \\    
    Row & $\sigma_R$ & $\sigma_X$ & $\hat R$ & $\hat F$ & $\hat E$ & $\hat X$ & $\hat Z$ & $\Delta$ & $X \;\,$ & $Z \;\,$ & Total & $\mu^*$ & MVPF \\
    \hline
    \multicolumn{14}{c}{\textbf{Panel A: $\sigma_E=3.00$ (high value)}} \\
    \hline
     1 & 0.25 & 0.25 & 0.686 & -0.310 & 0.022 & -0.003 & -0.097 & -0.06 & -99.5 & 92.2 & -7.3 & 215.9 & 0.995 \\
     2 & 0.75 & 0.25 & 1.537 & -0.695 & 0.050 & -0.008 & -0.216 & -0.06 & -223.0 & 206.5 & -16.5 & 215.9 & 0.990 \\
     3 & 1.50 & 0.25 & 2.228 & -1.007 & 0.072 & -0.011 & -0.314 & -0.06 & -323.2 & 299.3 & -23.9 & 215.9 & 0.987 \\
     4 & 0.25 & 0.75 & 0.694 & -0.244 & 0.069 & -0.003 & -0.080 & -0.23 & -100.7 & 75.9 & -24.8 & 265.3 & 0.984\\
     5 & 0.75 & 0.75 & 1.579 & -0.555 & 0.157 & -0.008 & -0.181 & -0.23 & -229.0 & 172.6 & -56.4 & 265.3 & 0.966\\
     6 & 1.50 & 0.75 & 2.316 & -0.814 & 0.230 & -0.011 & -0.265 & -0.23 & -335.9 & 253.2 & -82.7 & 265.3 & 0.954 \\
     7 & 0.25 & 1.50 & 0.705 & -0.158 & 0.130 & -0.003 & -0.057 & -0.46 & -102.2 & 54.8 & -47.5 & 373.3 & 0.969 \\
     8 & 0.75 & 1.50 & 1.635 & -0.367 & 0.301 & -0.008 & -0.133 & -0.46 & -237.1 & 127.0 & -110.0 & 373.3 & 0.935 \\
     9 & 1.50 & 1.50 & 2.438 & -0.548 & 0.448 & -0.012 & -0.199 & -0.46 & -353.7 & 189.5 & -164.2 & 373.3 & 0.909 \\
    \hline
    \multicolumn{14}{c}{\textbf{Panel B: Low $\sigma_E=0.25$ (low value)}} \\
    \hline
    10 & 0.25 & 0.25 & 0.302 & -0.009 & 0.094 & -0.001 & -0.009 & -0.79 & -43.7 & 8.9 & -34.9 & 984.3 & 0.977\\
    11 & 0.75 & 0.25 & 0.399 & -0.012 & 0.125 & -0.002 & -0.012 & -0.79 & -57.8 & 11.7 & -46.1 & 984.3 & 0.969 \\
    12 & 1.50 & 0.25 & 0.433 & -0.013 & 0.136 & -0.002 & -0.013 & -0.79 & -62.9 & 12.8 & -50.1 & 984.3 & 0.967 \\
    13 & 0.25 & 0.75 & 0.400 & 0.147 & 0.232 & -0.002 & 0.029 & -1.49 & -58.0 & -28.0 & -86.0 & - & 0.943 \\
    14 & 0.75 & 0.75 & 0.591 & 0.218 & 0.342 & -0.003 & 0.043 & -1.49 & -85.7 & -41.4 & -127.0 & - & 0.917 \\
    15 & 1.50 & 0.75 & 0.671 & 0.247 & 0.388 & -0.003 & 0.049 & -1.49 & -97.3 & -47.0 & -144.2 & - & 0.907 \\
    16 & 0.25 & 1.50 & 0.494 & 0.297 & 0.362 & -0.002 & 0.066 & -1.90 & -71.6 & -63.1 & -134.7 & - & 0.911 \\
    17 & 0.75 & 1.50 & 0.820 & 0.493 & 0.602 & -0.004 & 0.110 & -1.90 & -119.0 & -104.9 & -223.9 & - & 0.857 \\
    18 & 1.50 & 1.50 & 0.983 & 0.591 & 0.722 & -0.005 & 0.132 & -1.90 & -142.6 & -125.7 & -268.3 & - & 0.832 \\
    \hline \hline
    \multicolumn{14}{p{17cm}}{\small{Notes: These simulations normalize all initial prices equal to one with $\hat s =10$ and $\phi=1$. The MED is set at $\mu=200$. Also, the input resource share parameters are fixed at $\alpha_R=0.020$, $\alpha_F=0.051$, and $\alpha_X=0.929$.}}
    \end{tabular}}
    \label{Tab:Elasticity}
\end{table}

Some of the key results from Table~\ref{Tab:Elasticity} are also illustrated in Figure~\ref{fig:sigmaRX_emissions}, which shows how emissions vary with elasticities values, and Figure~\ref{fig:sigmaRX_welfare}, which shows how welfare varies with elasticities values. Figure~\ref{fig:sigmaRX_emissions} demonstrates that emissions increase for a wide range of $\sigma_R$ and $\sigma_X$ values when $\sigma_E$ is low, but decrease if $\sigma_E$ is high. In contrast, Figure~\ref{fig:sigmaRX_welfare} shows that welfare increases or decreases depending on both $\sigma_R$ and $\sigma_X$. If $\sigma_R$ or $\sigma_X$ are close to zero, then the change in welfare is negative, but small. However, if both $\sigma_R$ and $\sigma_X$ are large, then the change in welfare is substantially negative. 

\begin{figure}[t!]
     \centering
     \begin{subfigure}[b]{0.49\textwidth}
         \centering
         \includegraphics[width=\textwidth]{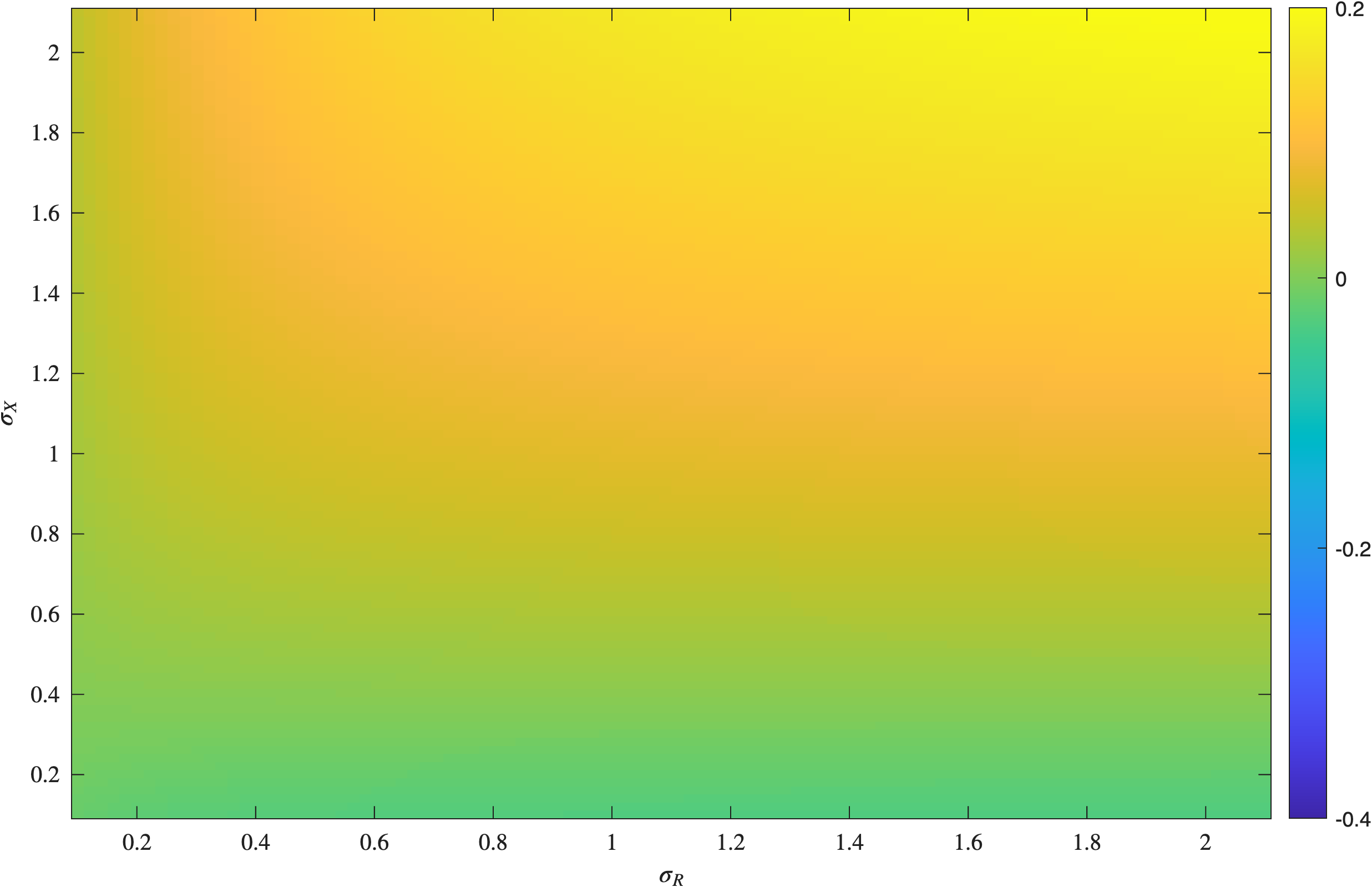}
         \caption{Low $\sigma_E=0.25$.}
         \label{fig:sigmaRX_emissions_lowE}
     \end{subfigure}
     \hfill
     \begin{subfigure}[b]{0.49\textwidth}
         \centering
         \includegraphics[width=\textwidth]{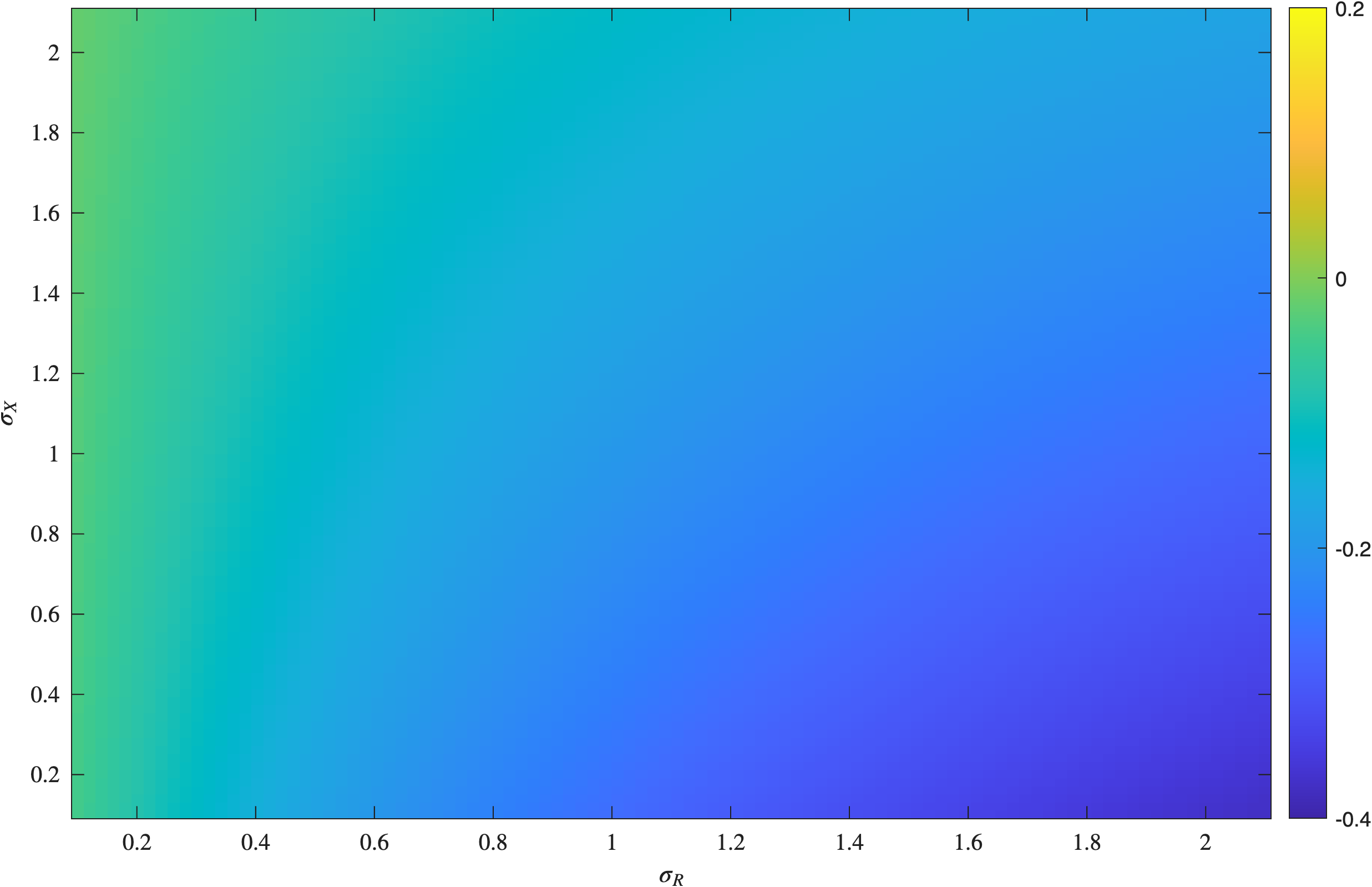}
         \caption{High $\sigma_E=3$.}
         \label{fig:sigmaRX_emissions_highE}
     \end{subfigure}
     \caption{Change in emissions as elasticities vary, $\hat{Z}$.}
     \label{fig:sigmaRX_emissions}
\end{figure}

\begin{figure}[t!]
     \centering
     \begin{subfigure}[b]{0.49\textwidth}
         \centering
         \includegraphics[width=\textwidth]{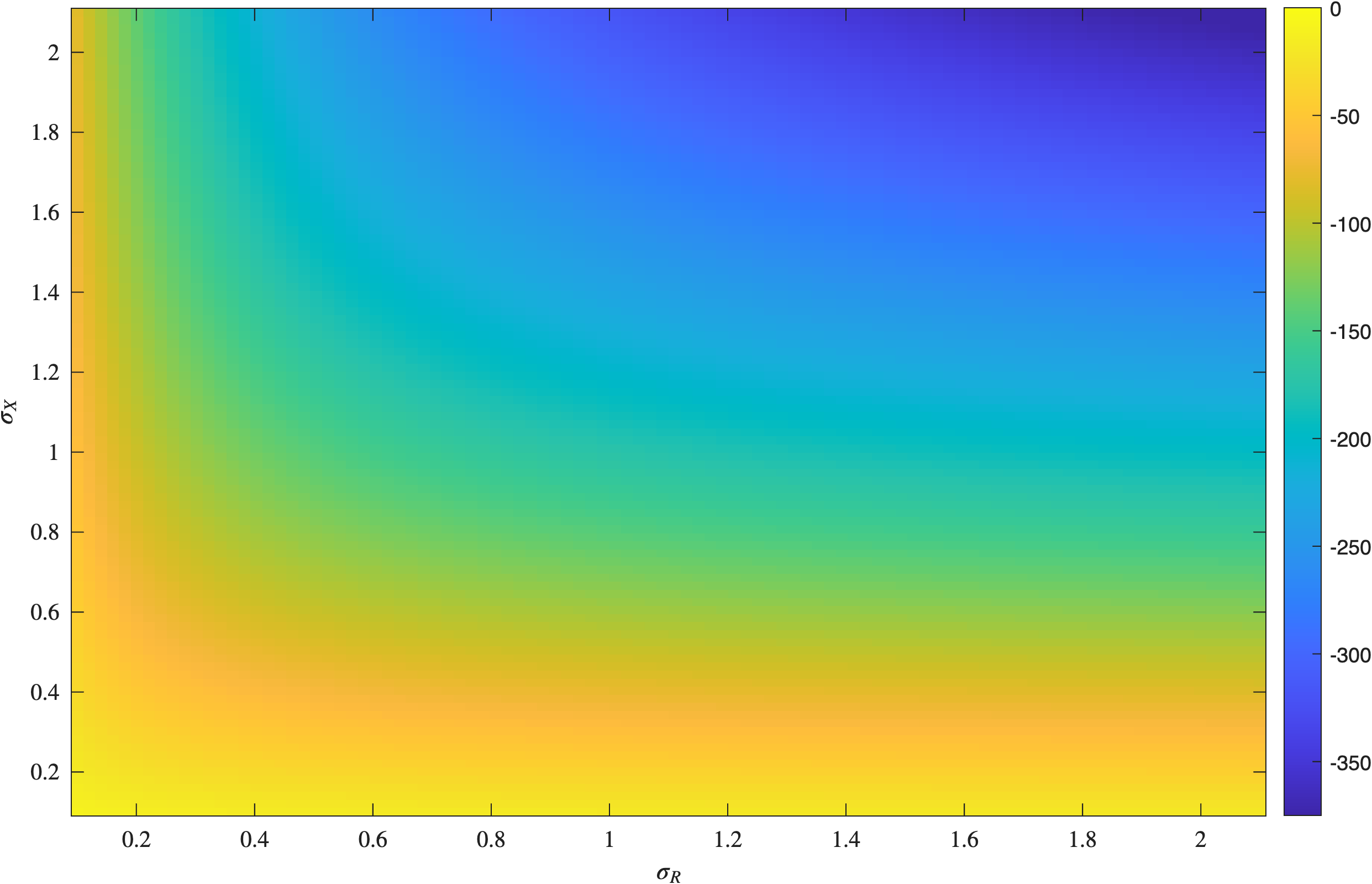}
         \caption{Low $\sigma_E=0.25$.}
         \label{fig:sigmaRX_welfare_lowE}
     \end{subfigure}
     \hfill
     \begin{subfigure}[b]{0.49\textwidth}
         \centering
         \includegraphics[width=\textwidth]{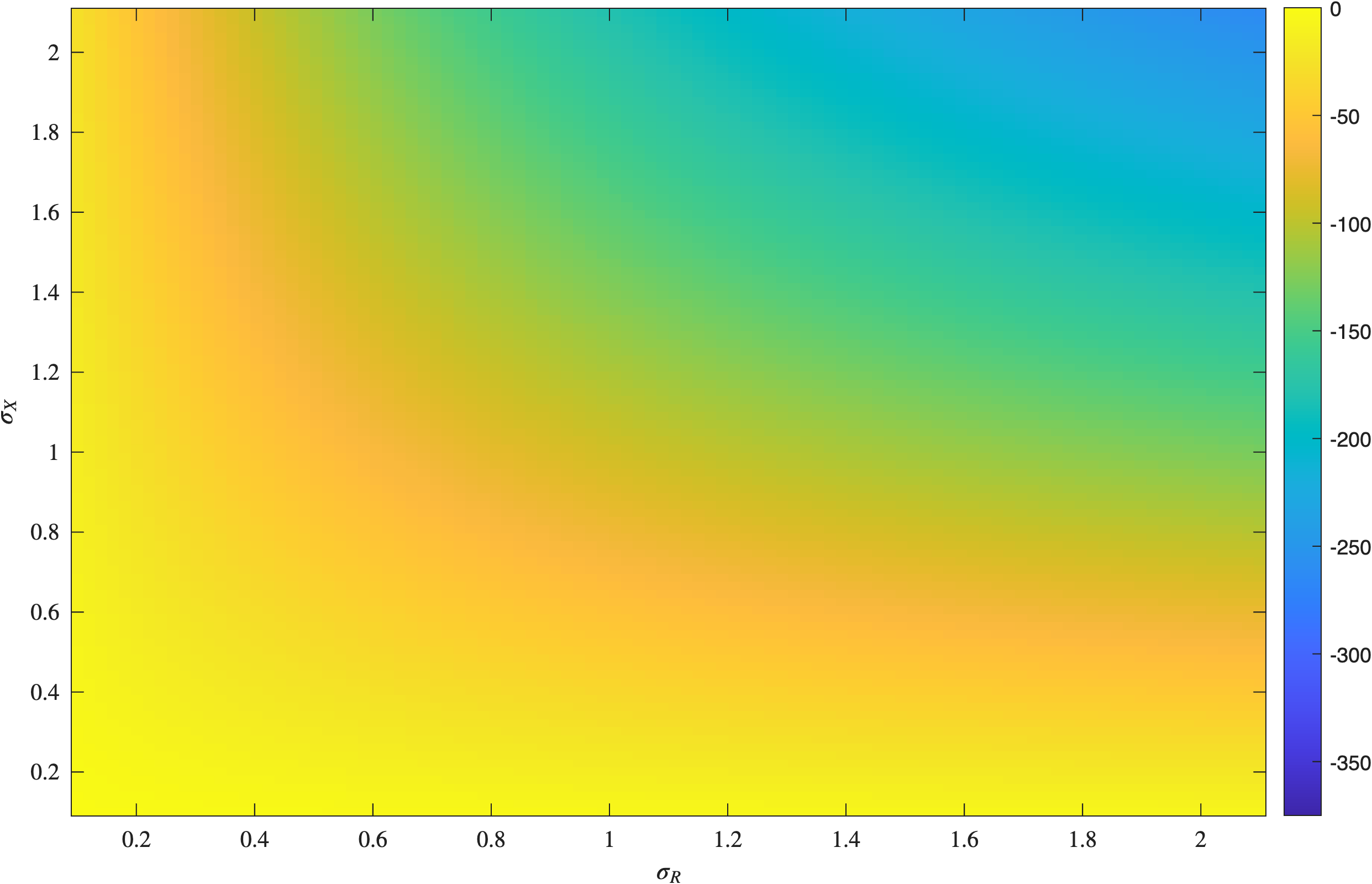}
         \caption{High $\sigma_E=3$.}
         \label{fig:sigmaRX_welfare_highE}
     \end{subfigure}
     \caption{Change in welfare as elasticities vary (billions of dollars).}
     \label{fig:sigmaRX_welfare}
\end{figure}

Next, Table \ref{Tab:Emission} reports results from changing the initial emission intensities by varying the underlying emission quantities across sectors $(Z_F,Z_X)$ while maintaining $4772=Z_F+Z_X$ in all rows. Changing the emission quantities then changes the emission intensities $(\xi_F,\xi_X)$ and the emission shares $(\rho_F,\rho_X)$ leading to different proportional emission changes $(\hat Z)$ even though the underlying economy is unchanged. That is, all rows in Table \ref{Tab:Emission} have the same change in carbon offsets $(\hat{R})$, fossil fuels $(\hat{F})$, total energy $(\hat{E})$, and final goods $(\hat X)$. (Recall that the change in emissions, $\hat{Z}$, is an auxiliary variable as defined in Section \ref{Sec:Model}.) Since $\hat X$ is fixed, then the change in welfare due to $X$ is the same in all rows. Rows 1-7 find a positive total welfare change despite $\Delta<0$. That is, the welfare change can be positive despite conventional carbon accounting falling short of aggregate emissions changes. These rows also have an MVPF greater than one. Row 8 solves for the emission intensities required such the costs of the offsets equal the benefits, and this finds $\mu^*=200$ by construction. Relatedly, since there is no change in total welfare, the MVPF equals one as the marginal willingness to pay (WTP) for offset program equals the marginal cost. Rows 9-14 have negative total welfare as found in Table \ref{Tab:Elasticity} and thus the MVPF is less than one. Interestingly, rows 12-14 report $\Delta>0$ meaning aggregate emission reductions are larger than the conventional calculation, but these rows have worse total welfare outcomes as a result of the policy change. Therefore, focusing just on additionality is a misleading metric for the welfare effect of a carbon offset policy. Portions of these results are also shown in Figure~\ref{fig:emissions_intensity_Delta}, which shows how $\Delta$ varies as the emissions intensity changes, with very high and positive values for $\Delta$ when Sector $F$'s initial emissions share is small, before rapidly decreasing and becoming negative as its share of emissions grows.

\begin{figure}[t!]
\centering
    \includegraphics[width=\textwidth]{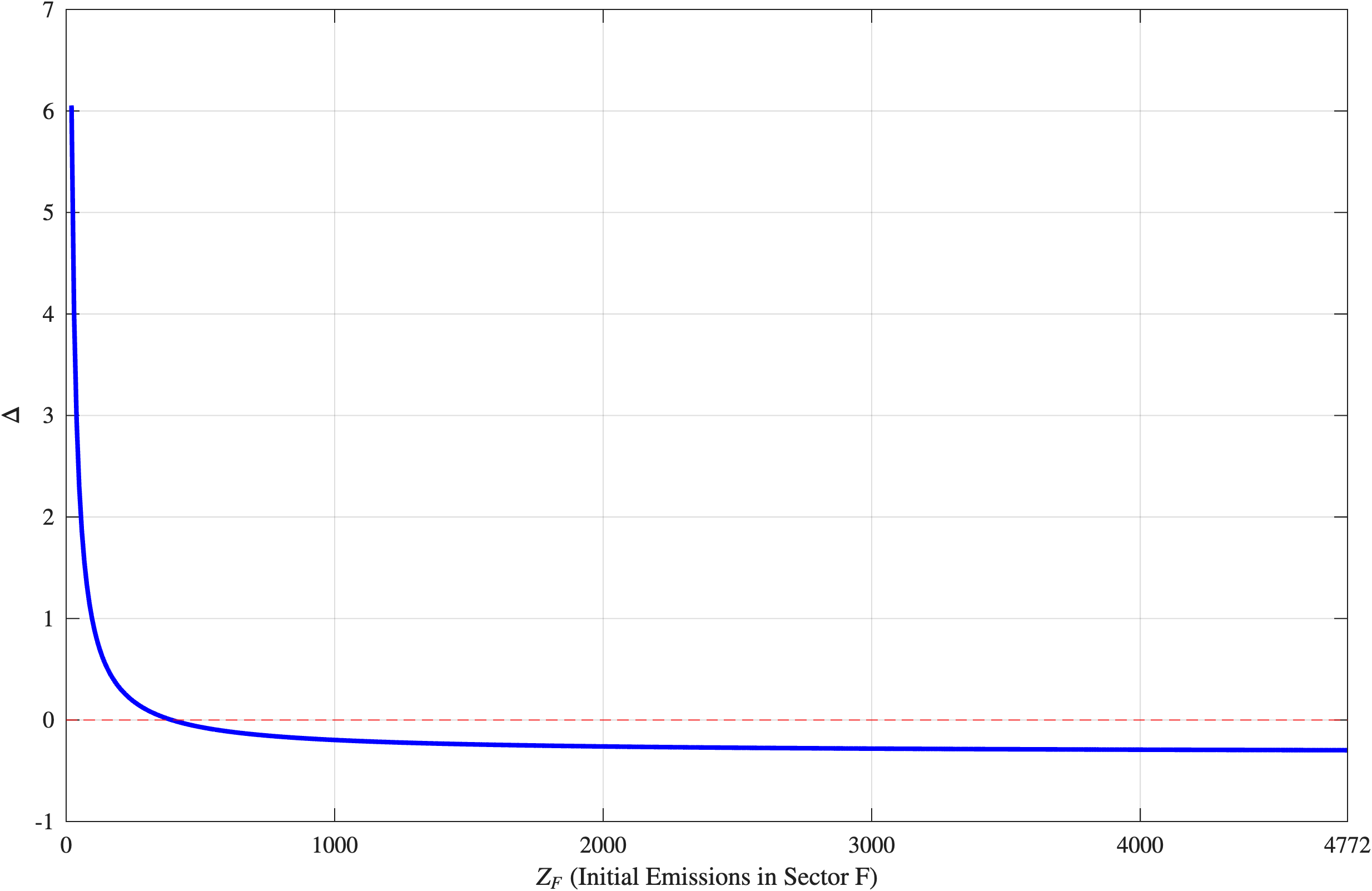}
    \caption{Change in $\Delta$ as emissions intensities vary.}
    \label{fig:emissions_intensity_Delta}
\end{figure}

The results in Table \ref{Tab:Emission} and Figure~\ref{fig:emissions_intensity_Delta} suggest that $\Delta$ can only be positive as $Z_F\rightarrow0$, but this is not the case. We repeat the numerical calculation in Row 1 (with $Z_F=4772$ and $Z_X=0$), but change the elasticity of substitution in the energy sector from $\sigma_E=3$ to $\sigma_E=10000$ (or nearly perfectly elastic). In this case, $\hat{F}<0$ and sufficiently negative to generate aggregate emission reductions larger than the conventional calculation using $\hat{R}>0$, and leading to $\Delta=0.015$. This result is driven by the ease that sector $E$ can switch from $F$ to $R$. This example numerically demonstrates the importance of $\sigma_E$ to the economic outcomes of an offset policy as well as further emphasizes the role of $\sigma_E$ in equations \ref{Soln:F} and \ref{Soln:E} that solve for $\hat{F}$ and $\hat{E}$, respectively. Similarly, from Tables \ref{Tab:Elasticity} and \ref{Tab:Emission}, one might infer that the backfire (i.e., $\hat{Z}>0$) can occur only if there are emissions from Sector $X$, but this is not the case either. To start, if $\xi_X=0$, then $\hat{Z}=\rho_F\hat{F}$ by Equation \ref{Eq:Zhat}.
Next, recall that Proposition \ref{Prop:FhatSign} finds that values of $\sigma_E$ always exist such that $\hat{F}>0$. Therefore, backfire can occur even if the rest of the economy $(X)$ emits zero pollution. 

\begin{table}[t!]
    \centering
    \caption{Numerical Results Varying the Emission Intensities} \vspace{1mm}
    \resizebox{\columnwidth}{!}{%
    \begin{tabular}{c|cc|ccccc|c|rrr|c|c}
    \hline \hline
     & \multicolumn{2}{c|}{Emissions} & \multicolumn{5}{c|}{Emission Parameters \& Change (\%)} & Add. & \multicolumn{3}{|c|}{Welfare (\$billion)} & (\$) & (Index) \\    
    Row & $Z_F$ & $Z_X$ & $\xi_F$ & $\xi_X$ & $\rho_F$ & $\rho_X$ & $\hat Z$ & $\Delta$ & $X \;\,$ & $Z \;\,$ & Total & $\mu^*$ & MVPF \\
    \hline
     1 & 4772 & 0 & 3257.3 & 0.0 & 1.000 & 0.000 & -0.244 & -0.298 & -100.7 & 233.0 & 132.3 & 86.5 & 1.085 \\
     2 & 4500 & 272 & 3071.7 & 10.1 & 0.943 & 0.057 & -0.231 & -0.296 & -100.7 & 220.2 & 119.5 & 91.5 & 1.077 \\
     3 & 4000 & 772 & 2730.4 & 28.6 & 0.838 & 0.162 & -0.206 & -0.292 & -100.7 & 196.7 & 96.0 & 102.4 & 1.062 \\
     4 & 3500 & 1272 & 2389.1 & 47.2 & 0.733 & 0.267 & -0.182 & -0.288 & -100.7 & 173.3 & 72.5 & 116.3 & 1.047 \\
     5 & 3000 & 1772 & 2047.8 & 65.7 & 0.629 & 0.371 & -0.157 & -0.282 & -100.7 & 149.8 & 49.1 & 134.5 & 1.032 \\
     6 & 2500 & 2272 & 1706.5 & 84.3 & 0.524 & 0.476 & -0.132 & -0.273 & -100.7 & 126.3 & 25.6 & 159.5 & 1.017 \\
     7 & 2000 & 2772 & 1365.2 & 102.8 & 0.419 & 0.581 & -0.108 & -0.261 & -100.7 & 102.8 & 2.1 & 195.9 & 1.001 \\
     8 & 1955 & 2817 & 1334.4 & 104.5 & 0.410 & 0.590 & -0.106 & -0.259 & -100.7 & 100.7 & 0.0 & 200.0 & 1.000 \\
     9 & 1500 & 3272 & 1023.9 & 121.4 & 0.314 & 0.686 & -0.083 & -0.239 & -100.7 & 79.3 & -21.4 & 253.8 & 0.986 \\
     10 & 1000 & 3772 & 682.6 & 139.9 & 0.210 & 0.790 & -0.059 & -0.197 & -100.7 & 55.9 & -44.8 & 360.5 & 0.971 \\
     11 & 500 & 4272 & 341.3 & 158.5 & 0.105 & 0.895 & -0.034 & -0.069 & -100.7 & 32.4 & -68.3 & 621.9 & 0.956 \\
     12 & 250 & 4522 & 170.6 & 167.8 & 0.052 & 0.948 & -0.022 & 0.186 & -100.7 & 20.7 & -80.1 & 975.3 & 0.948 \\
     13 & 100 & 4672 & 68.3 & 173.3 & 0.021 & 0.979 & -0.014 & 0.952 & -100.7 & 13.6 & -87.1 & 1480.2 & 0.944 \\
     14 & 10 & 4762 & 6.8 & 176.7 & 0.002 & 0.998 & -0.010 & 12.434 & -100.7 & 9.4 & -91.3 & 2146.9 & 0.941 \\
    \hline \hline
    \multicolumn{14}{p{17cm}}{\small{Notes: These simulations normalize all initial prices equal to one with $\hat{s}=10$ and $\phi=1$. The initial emission intensities vary by row. Total emissions are fixed such that $4772=Z=Z_F+Z_X$. The elasticity parameters are set at $\sigma_R=0.25$, $\sigma_X=0.75$, and $\sigma_E=3.00$. All other parameters are set to the values in Table \ref{Tab:Elasticity}. For all row, the quantity changes are $\hat{R}=0.694$, $\hat{F}=-0.244$, $\hat{E}=0.069$, and $\hat{X}=-0.003$.}}
    \end{tabular}}
    \label{Tab:Emission}
\end{table}

Finally, Table \ref{Tab:Subsidy} and Figure~\ref{fig:subsidy_welfare} show results from changing the initial capital tax and offset rates. In the prior tables, the initial \textit{ad valorem} capital tax is set at $t=0.005$ implying the per unit offset is $0.197$ (and repeated in row 5 here). Note that the initial capital tax is small in absolute terms because the capital tax base is large relative to the renewable sector. Table \ref{Tab:Subsidy} row 1 changes the initial tax to its smallest value, $t=0.001$, and finds a positive total welfare change---despite a negative additionality ratio. Also, $\mu^*=48.1$ means the per unit benefit of abatement would not need to be close to the social cost of carbon in our study $(\$200)$ to yield positive welfare gains. This result arises as the marginal cost of increasing the offset price $(\hat X)$ is small when the initial tax is small; that is, the height of the deadweight loss triangle is initially small. In contrast, row 10 has an initial tax ten-times larger than row 1, implying the initial deadweight loss is larger and leading to larger costs from increasing the offset price. In this case, the welfare change is negative and $\mu^*=607.3$, notwithstanding the proportional reduction in emissions being much larger in the $t=0.010$ case. Table \ref{Tab:Subsidy} further demonstrates the monotonic relationship between MVPF and welfare with $\textrm{MVPF}>1$ implying positive welfare changes and $\textrm{MVPF}<1$ implying negative welfare changes. When interpreting Table \ref{Tab:Subsidy}, one should take caution when comparing results across rows as the initial cost shares vary as the initial tax and offset price vary. In other words, each row represents a different economy.

\begin{figure}[t!]
\centering
    \includegraphics[width=\textwidth]{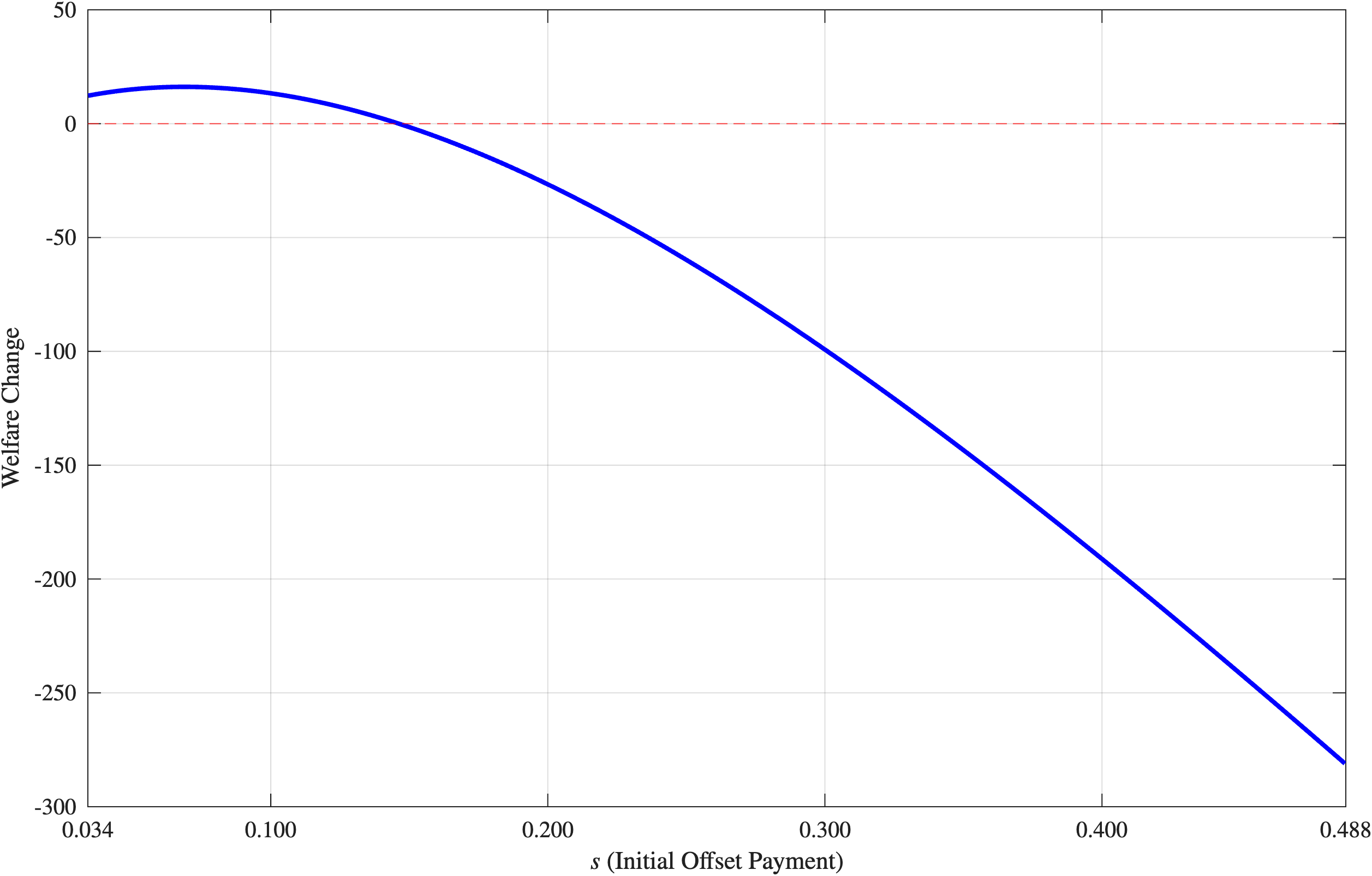}
    \caption{Change in welfare as initial offset payment level varies.}
    \label{fig:subsidy_welfare}
\end{figure}

\begin{table}[t!]
    \centering
    \caption{Numerical Results Varying the Initial Tax and Offset Payment Values} \vspace{1mm}
    \resizebox{\columnwidth}{!}{%
    \begin{tabular}{c|cc|ccccc|c|rrr|c|c}
    \hline \hline
     & \multicolumn{2}{c|}{Offset/Tax} & \multicolumn{5}{c|}{Quantities (\% Change)} & Add. & \multicolumn{3}{|c|}{Welfare (\$billion)} & (\$) & (Index) \\    
    Row & $s$ & $t$ & $\hat R$ & $\hat F$ & $\hat E$ & $\hat X$ & $\hat Z$ & $\Delta$ & $X \;\,$ & $Z \;\,$ & Total & $\mu^*$ & MVPF \\
    \hline
     1 & 0.034 & 0.001 & 0.135 & -0.053 & 0.016 & -0.0001 & -0.017 & -0.274 & -3.9 & 16.2 & 12.3 & 48.2 & 1.042 \\
     2 & 0.071 & 0.002 & 0.271 & -0.104 & 0.031 & -0.0005 & -0.033 & -0.265 & -15.7 & 31.9 & 16.2 & 98.7 & 1.027 \\
     3 & 0.110 & 0.003 & 0.410 & -0.153 & 0.044 & -0.0012 & -0.049 & -0.255 & -35.7 & 47.1 & 11.4 & 151.6 & 1.013 \\
     4 & 0.152 & 0.004 & 0.551 & -0.200 & 0.057 & -0.0022 & -0.065 & -0.245 & -64.0 & 61.8 & -2.2 & 207.0 & 0.998 \\
     5 & 0.197 & 0.005 & 0.694 & -0.244 & 0.069 & -0.0034 & -0.080 & -0.235 & -100.7 & 75.9 & -24.8 & 265.3 & 0.984 \\
     6 & 0.246 & 0.006 & 0.840 & -0.286 & 0.080 & -0.0050 & -0.094 & -0.224 & -146.1 & 89.5 & -56.7 & 326.7 & 0.970 \\
     7 & 0.299 & 0.007 & 0.987 & -0.325 & 0.089 & -0.0068 & -0.107 & -0.212 & -200.4 & 102.4 & -98.0 & 391.4 & 0.956 \\
     8 & 0.356 & 0.008 & 1.136 & -0.361 & 0.097 & -0.0090 & -0.120 & -0.199 & -263.7 & 114.7 & -149.0 & 459.9 & 0.942 \\
     9 & 0.419 & 0.009 & 1.288 & -0.395 & 0.104 & -0.0115 & -0.132 & -0.186 & -336.3 & 126.3 & -210.0 & 532.4 & 0.929 \\
     10 & 0.488 & 0.010 & 1.442 & -0.425 & 0.110 & -0.0143 & -0.144 & -0.171 & -418.3 & 137.3 & -281.0 & 609.4 & 0.915 \\
    \hline \hline
    \multicolumn{14}{p{17cm}}{\small{Notes: These simulations normalize all initial prices equal to one with $\hat{s}=10$ and $\phi=1$. The MED is set at $\mu=200$, and the elasticities are $\sigma_R=0.25$, $\sigma_X=0.75$, and $\sigma_E=3.00$. Also, the input resource share parameters are fixed at $\alpha_R=0.020$, $\alpha_F=0.051$, and $\alpha_X=0.929$. The initial cost shares vary by row.}}
    \end{tabular}}
    \label{Tab:Subsidy}
\end{table}

\subsection{Varying Initial Additionality Share} \label{Sec:VaryPhi}

All prior numerical simulations hold $\phi=1$, so this section analyzes how $\Delta$ changes when the initial share of additionality changes (recalling $\phi \equiv R^A/R$). Figure~\ref{fig:DeltaPhi} graphs $\Delta$ as a function of $\phi$ for four different elasticity cases, holding all other parameters constant, with share parameters the same as employed in Table~\ref{Tab:Elasticity}. We only graph values of $\phi \in [0.05,1]$ since $\Delta \rightarrow |\infty|$ as $\phi \rightarrow0$. For each case, we record the value (if it exists) where $\Delta=0$, that is, the value of $\phi$ for which conventional carbon accounting accurately measures aggregate emissions changes.

The four different elasticity cases represent three different qualitative outcomes. Case 1 is the Standard case with $\sigma_R=0.25$, $\sigma_X=0.75$, and $\sigma_E=3.00$. For this case, $\Delta=-0.25$ when $\phi=1$, see Table~\ref{Tab:Elasticity} row 4. Then, Case 1 intersects the horizontal axis at $\phi=0.765$ with $\Delta \approx 15$ at $\phi=0.05$. Recall that $\Delta>0$ implies conventional carbon accounting under-counts actual emission reductions.

Case 2 is similar to Case 1 in that $\Delta<0$ for $\phi=1$ and crosses the horizontal axis, but at a much lower $\phi$ value of 0.206. This Minimal change case induces only small quantity changes due to the small elasticity values $(\sigma_E=\sigma_X=\sigma_R=0.25)$. The Minimal change case corresponds to Table~\ref{Tab:Elasticity} row 10 and this row contains the smallest $\hat{Z}$ value reported in the table. Note that $\hat{Z}$ does not change when $\phi$ changes. Case 2 also shows that not all horizontal intercepts are identical for cases that cross the axis.

Case 3 never intersects the horizontal axis and $\Delta$ is always positive. We call this the Non-reactive case due to its small Sector $X$ elasticity value $(\sigma_X=0.10)$ and high substitutability of $F$ and $R$ ($\sigma_E = 5$). This is the case closest to satisfying the implicit assumptions under conventional carbon accounting: renewables can easily replace fossil fuels, and Sector $X$ does not substantially alter its inputs in response. Hence, conventional carbon accounting understates aggregate emissions change (i.e, true emission reductions) even when all initial offsets are additional by traditional measures $(\phi=1)$; this is because conventional accounting closely matches total emission and reductions from Sector~$X$ are small. Reducing $\phi$ increases $\Delta$ as in the previous cases. The energy sector’s large elasticity value $(\sigma_E=5)$ also contributes to this unusual scenario of persistently under-counting emissions reductions for all values of $\phi$.

Case 4 never intersects the horizontal axis as well, but we construct Case 4 such that $\Omega_A>0$ implying backfire. Also, $\Delta < -1$ always and lower $\phi$ leads to larger negative values. The Backfire case corresponds to Table~\ref{Tab:Elasticity} row 14 with $\hat{Z}=0.043$. In summary, Figure~\ref{fig:DeltaPhi} demonstrates how the initial share of offsets that are additional ($\phi$) impacts $\Delta$. It also shows how the underlying economic parameters are key to this relationship, as differing values of $\phi$ can lead to wildly different $\Delta$ values depending on the underlying economy.

\begin{figure}[t!]
\centering
    \includegraphics[width=\textwidth]{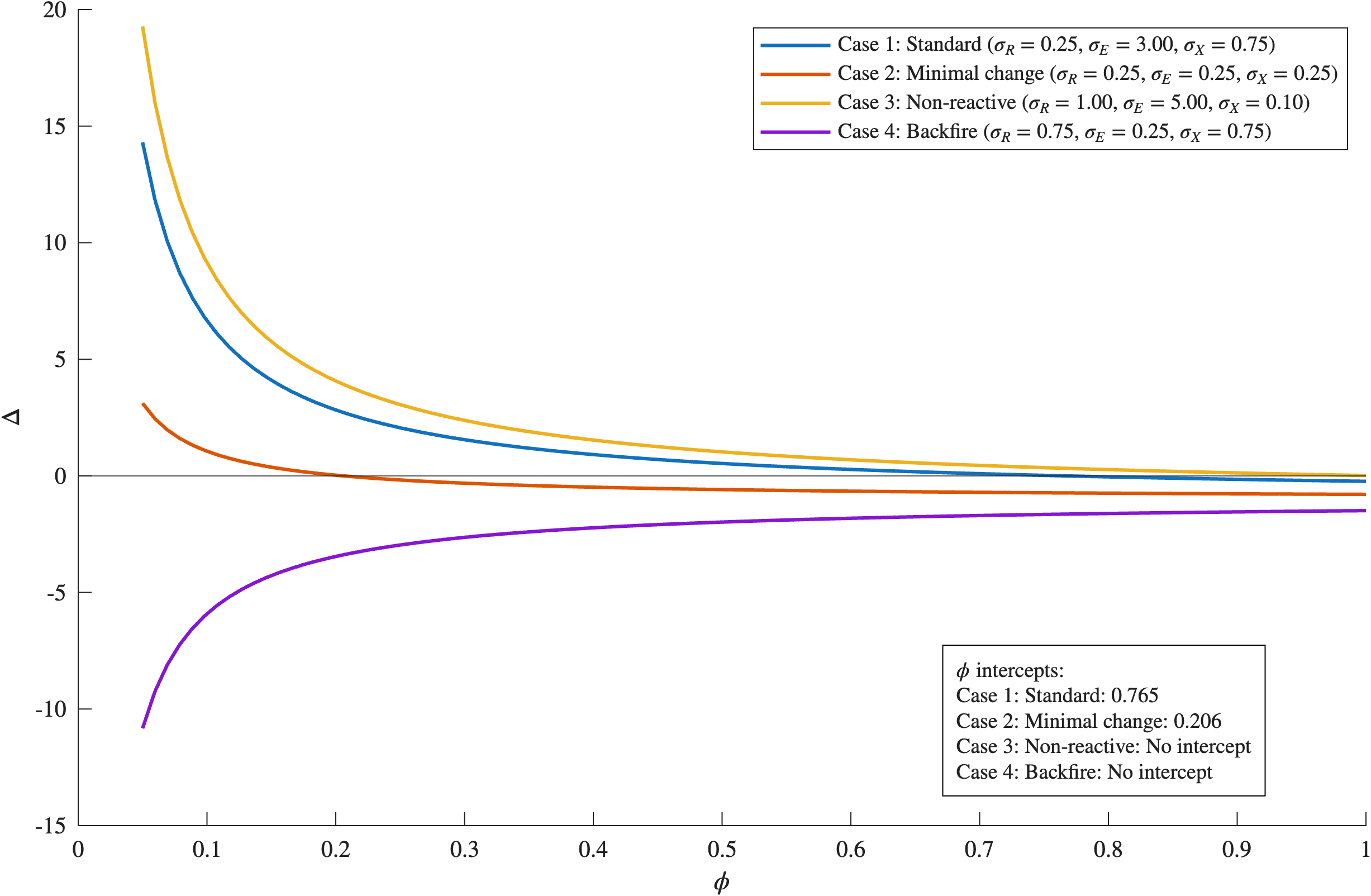}
    \caption{Change in additional as initial additionality share varies.}
    \label{fig:DeltaPhi}
\end{figure}

\subsection{Case Study: Carbon Capture and Storage (CCS)} \label{Sec:CCS}

The Energy Improvement and Extension Act of 2008 established the U.S. federal tax credit known as ``45Q''---in reference to its section in the federal tax code---to incentivize carbon capture and storage (CCS). Both the Inflation Reduction Act (IRA) of 2022 and the One Big Beautiful Bill (OBBB) of 2025 expanded the 45Q policy, suggesting that CCS is politically popular. In contrast, direct carbon pricing by the U.S. federal government has not occurred to date. At its core, 45Q is a subsidy for carbon-free, fossil-fuel-based electricity generation that operates much in the same way as a government-funded carbon offset purchase would operate.\footnote{Additionally, 45Q subsidizes enhanced oil recovery using CO$_2$, but we focus on 45Q's electricity market effects in this study. Also, we abstract from issues regarding the long-term stability of carbon sequestration.}
Thus, the 45Q tax credit fits well in our model to assess the additionality and emissions impacts of CCS policy.

The 2025 U.S. Annual Energy Outlook (AEO) provides projections of CCS outcomes in the electricity market under the IRA's 45Q policy, and thus we use the AEO projections to parameterize the model \citep{USEIA2025aeo}. In this case study, sector $R$ is all fossil-fuel fired electric generating units (EGUs) with CCS and $F$ is all fossil-fuel fired EGUs without carbon sequestration, so $E$ is total fossil electricity production and $X$ is all other sectors. In this setting, $\bar{Q}$ can be interpreted as the specific geological features required to make CCS feasible. Currently, CCS is not a prominent feature of the U.S. electric power sector, so we use AEO 2025's forecast for 2040 as the year of study. 

Under the 2025 AEO's reference scenario forecast, total 2040 $\textrm{CO}_2$ emissions from fossil-fuel EGUs is 490.5 million metric tons (MMmt) while CSS-enabled EGUs capture 38.0 MMmt. We model $R$ as emission neutral and thus 452.5 MMmt in  net emissions are attributed to $F$. For context, these carbon emissions arise from a base of 2,232 gigawatts (GW) of total electric power sector capacity generating 5,472 billion kWh.\footnote{For comparison, the 2027 forecast is 1,153.5 MMmt from $E$ (with zero CCS emission reductions) from 1,300 GW of capacity that generates 4,322 billion KWh. Thus, the 2025 AEO projects significant de-carbonization of the U.S. electric power sector without considering the net reductions from CCS.}
However, fossil-fuel based generation is only 1,284 billion kWh or 23.4\% of total electricity production. 
Total, economy-wide U.S. $\textrm{CO}_2$ emissions are projected to be 3456.3 MMmt and thus implying sector $X$ (i.e., rest of the economy) accounts for 3002.8 MMmt.

In terms of GDP and sectoral revenue, the 2040 GDP projection is \$38.97 trillion (in 2024 dollars). The 2040 average electricity price is 11.5 cents/kWh, so 1,284 billion kWh from fossil-fuel based generation implies \$149.7 billion in revenue (i.e., $p_EE=149.7$). For $R$ (i.e., CCS-enabled EGUs), the subsidy payment is \$85 per metric ton $\textrm{CO}_2$ sequestered \citep{CBO-Q45}, and thus leading to a total subsidy payment of \$3.2 billion to sequester 38.0 MMmt.\footnote{The Congressional Budget Office (CBO) report finds a similar per year cost estimate and states that the subsidy will be paid to a CCS-enabled EGU if ``Construction Beginning Prior to 1/1/2033''. The subsidy program is politically popular currently, so it is reasonable to assume that all of CCS-enabled EGUs in 2040 will be receiving the subsidy.} The AEO does not provide generation projections from CCS-enabled EGUs separately from total total fossil-fuel based generation, so we assume $p_EE$ is split between $R$ and $F$ by their share of (absolute) emissions. Under these assumptions, total revenue in $R$, denoted $(P_R+s)R$, is \$14.8 billion.\footnote{Specifically, the calculation is $149.7 \times (|\minus38|/490.5)+3.2=14.8$ (after independent rounding).} 
Then, as a residual, $P_FF=138.1$. Finally, we assume that one-third of the CCS unit's cost come from the fixed factor $(P_Q\bar{Q})$ to secure sequestration sites. The implied per unit subsidy is $s=0.279$ while the \textit{ad valorem} tax $(t)$ on capital is only a 0.008\% since sector $R$ is small.\footnote{As result, $\hat{X}$ in the numerical analysis is quite small and thus omitted from Table \ref{Tab:Sequestration}.}

Table \ref{Tab:Sequestration} reports numerical results when varying the production elasticities (in a similar manner as Table~\ref{Tab:Elasticity}). For this exercise, we assume $\phi=1$ which means that all initial CCS capacity is additional by traditional measures because CCS would not be economically feasible without a subsidy or a significant carbon tax \citep{CBO-CCS}. Also, we fix $\sigma_E=3$ at for Table \ref{Tab:Sequestration} to reflect the high degree of substitution between CCS-enabled and non-CCS EGUs.\footnote{We also note that $\sigma_R$ is likely to be relatively low since CCS requires specific geological features that are not easily substitutable with technology \citep{netl_fAQs}.} For this case study, the change in welfare is always small and positive and, relatedly, the MVPF is greater than one and $\mu^*<\mu=200$. This net increase in welfare is due to the small initial tax rate $(t)$ such that increasing the subsidy by 10 percent does not lead to large distortions in $X$. The low absolute value of the increase in welfare comes from the small size of sector $R$ such that increasing the subsidy has a small total effect on the economy. Regardless, the emission reductions in sector $F$ are counteracted by an increase in emissions from sector $X$ as shown by $\Delta<0$; that is, not all of the offsets are additional in aggregate. For rows 7-9, $\Delta=-0.50$ meaning half of the emission reductions in $R$ due to the 45Q policy are erased by a corresponding increase in emissions elsewhere. 

\begin{table}[t!]
    \centering
    \caption{Carbon Sequestration Case Study Results} \vspace{1mm}
    \resizebox{\columnwidth}{!}{%
    \begin{tabular}{c|cc|cccc|c|rrr|c|c}
    \hline \hline
     & \multicolumn{2}{c|}{Elasticities} & \multicolumn{4}{c|}{Quantities (\% Change)} & Add. & \multicolumn{3}{|c|}{Welfare (\$billion)} & (\$) & (Index) \\    
    Row & $\sigma_R$ & $\sigma_X$ & $\hat R$ & $\hat F$ & $\hat E$ & $\hat Z$ & $\Delta$ & $X \;\,$ & $Z \;\,$ & Total & $\mu^*$ & MVPF \\
    \hline
     1 & 0.25 & 0.25 & 0.955 & -0.073 & 0.007 & -0.010 & -0.08 & -3.1 & 6.7 & 3.6 & 92.4 & 1.102 \\
     2 & 0.75 & 0.25 & 2.299 & -0.176 & 0.016 & -0.023 & -0.08 & -7.4 & 16.1 & 8.7 & 92.4 & 1.218 \\
     3 & 1.50 & 0.25 & 3.548 & -0.272 & 0.024 & -0.036 & -0.08 & -11.5 & 24.8 & 13.4 & 92.4 & 1.305 \\
     4 & 0.25 & 0.75 & 0.957 & -0.059 & 0.020 & -0.008 & -0.25 & -3.1 & 5.4 & 2.4 & 113.4 & 1.067 \\
     5 & 0.75 & 0.75 & 2.309 & -0.143 & 0.047 & -0.019 & -0.25 & -7.5 & 13.1 & 5.7 & 113.4 & 1.143 \\
     6 & 1.50 & 0.75 & 3.570 & -0.221 & 0.073 & -0.029 & -0.25 & -11.5 & 20.3 & 8.8 & 113.4 & 1.201 \\
     7 & 0.25 & 1.50 & 0.959 & -0.039 & 0.038 & -0.005 & -0.50 & -3.1 & 3.7 & 0.6 & 169.5 & 1.016 \\
     8 & 0.75 & 1.50 & 2.322 & -0.094 & 0.093 & -0.013 & -0.50 & -7.5 & 8.9 & 1.4 & 169.5 & 1.034 \\
     9 & 1.50 & 1.50 & 3.603 & -0.146 & 0.144 & -0.020 & -0.50 & -11.6 & 13.7 & 2.1 & 169.5 & 1.048 \\
    \hline \hline
    \multicolumn{13}{p{15cm}}{\small{Notes: These simulations normalize all initial prices equal to one with $\hat{s}=10$ and $\phi=1$. The MED is set at $\mu=200$. Also, the input resource share parameters are fixed at $\alpha_R=0.0003$, $\alpha_F=0.0035$, and $\alpha_X=0.9962$. We set $\sigma_E=3$ for all rows.}}
    \end{tabular}}
    \label{Tab:Sequestration}
\end{table}


\section{Conclusion}\label{sec:conclusion}

This study examines carbon offset policies in general equilibrium using an analytical model to find closed-form solutions while focusing on additionality, emissions, and welfare effects. The general equilibrium setting allows for the comparison between the direct effect of carbon offsets on emissions (i.e., ``conventional carbon accounting'') and the change in economy-wide emissions (i.e., ``aggregate carbon accounting'') for a given offset policy. Also, we decompose and categorize offsets into four types and show how they respond to payments, including one type not previously identified in the literature. Specifically, we demonstrate how offsets that are non-additional by traditional measures still respond to offset price changes on the margin. When this group of offsets is large, conventional additionality can understate aggregate emissions changes, and it is this offset type that has not been previously identified. 

What are the lessons for policymakers? First, given an offset payment increase, conventional carbon accounting is not a good proxy measure for welfare change since carbon offsets can decrease welfare, particularly, when the initial offset price is high, when the initial emissions from the dirty sector are low, and when clean and dirty energy are relatively complementary (i.e., not good substitutes).\footnote{We note that many carbon offset prices are currently low. Offset industry sources report average 2024 market prices around \$4 to \$6 per ton of CO$_2$ equivalent for voluntary offsets \citep{CarbonCredits2025}, with higher rated offsets (as determined by external rating companies) commanding higher prices that are still typically far below the social cost of carbon \citep{Sylvera2026}. This suggests that, all else equal, increases in offset prices may be welfare-improving due to their low initial price.} 
Instead of using conventional carbon accounting as the primary metric, policymakers concerned about social welfare could instead use the marginal value of public funds (MVPF), since the MVPF is a good proxy for welfare change in this setting. 

Second, conventional carbon accounting is also not a good proxy measure for emissions changes, and it can over- or under-credit offsets relative to actual emission reductions. Furthermore, even when all offsets are additional by traditional measures, aggregate emissions can increase due to the general equilibrium effects given an offset price increase. This ``backfire'' outcome can occur via two separate mechanisms. For the first mechanism, the price of energy relative to the price of capital increases, leading to substitution away from (relatively) clean energy and towards emissions-producing capital in the final goods sector.\footnote{Recall that ``capital'' is a composite of all inputs. By way of example, suppose an offset policy leads to a relative increase in the price of energy services for an electric heat pump relative to a natural gas furnace, then consumers shift away from the electric heat pump and towards the emissions-producing natural gas furnace.} 
This first backfire mechanism requires that capital and energy are relatively substitutable in the final goods sector. 
For the second mechanism, if renewable energy and fossil fuels are highly complementary, then an increase in renewable output leads to a corresponding increase in fossil fuel use, potentially creating backfire.\footnote{To illustrate, increases in renewable energy may lead to an increased need for ramping of fossil fuel plants during the times of day when renewables go offline (e.g., sunset); because ramping is relatively more emissions intensive \citep{malik2026}, this can erode some of the emissions benefits of renewables, or even reverse the effect.}

The general equilibrium effects of offsets vary across time and space, as the outcomes described above depend on specific ranges of parameter values---such as the substitutability of clean and dirty energy production---which change over time and by location. For example, when penetration rates of renewable energy are low in electricity markets, renewable and fossil fuel energy can be relatively substitutable, even in the absence of energy storage technologies. Indeed, in the absence of energy storage, intermittency is relatively easy for grid managers to deal with up until 25-50\% penetration of renewables \citep{heal2017reflections, leonard2020energy}. As the penetration of renewables in the energy grid increases, their intermittency becomes more challenging for grid managers, decreasing the substitutability of the two energy sources (lowering $\sigma_E$). However, as energy storage technologies become more cost-effective, renewable intermittency will become less of a concern, and thus renewables will become more substitutable with fossil fuels (raising $\sigma_E)$.
Finally, in the absence of backfire described above, the same mechanisms can still erode some of the emissions benefits and lead to the over-crediting of carbon offsets.

When deciding which sectors, industries, or regions to target with offset payments, policymakers may want to consider settings where $\sigma_E$ is likely to be high (implying a high degree of substitutability between $F$ and $R$), as higher $\sigma_E$ tends to lead to higher welfare in our model. For instance, they may wish to target offset payments to U.S. states with lower penetration rates of renewable energy. Conversely, consider the case of forestry offsets, where $E$ is forestry services, $F$ is timber production, and $R$ is non-consumptive forestry activities (such as recreation). In this setting, $\sigma_E$ is likely to be quite low due to the difficulty of substituting $R$ for $F$, and thus leading to over-crediting or backfire. This implies that forestry offsets may have lower social net benefits than offsets from other industries.

We model carbon offsets as a type of subsidy to renewable energy output. Thus, in addition to carbon offsets, our model can be applied to other green production credits, such as Renewable Energy Credits and the Renewable Energy Production Tax Credit in the U.S., and our results can be extrapolated to those settings as well. In contrast, some existing (non-offset) policies subsidize green capital, $K_R$. For example, U.S. policies also subsidize renewable energy investment via the Residential Clean Energy Tax Credit and the Advanced Energy Project Credit. In our model, the inclusion of the fixed factor of production makes a distinction between a subsidy on output ($R$) and a subsidy on green capital ($K_R)$. However, when $\sigma_R$ is small (i.e., highly complementary), the difference is likely to be small. Thus, if capital and the fixed factor are relatively complementary in green energy production, then many of our results are likely to apply to a green capital subsidy setting. However, our model abstracts away from some of the real-world complexities that cause disparate outcomes between green output and green investment subsidies (see \cite{aldy2023investment} for a more complete discussion of these differences).

We do not model voluntary offset markets in this study. Instead, we model government-purchased offsets in order to cleanly identify the general equilibrium effects and precisely decompose the different sources of emissions changes, offset types, and carbon accounting metrics. Modeling voluntary offsets would distract from the key general equilibrium impacts that we identify in our analysis. Furthermore, the zero-profit conditions mean there are no funds available for fully voluntary offset purchases. Similarly, the effects of offsets tied to specific carbon markets can depend on the exact policy details of the prevailing regulation, and thus we abstract from particular policy design to maintain focus on the overall general equilibrium effects from offsets. Future research could examine voluntary offsets and offsets tied to specific regulatory carbon market regimes in general equilibrium.

Our numerical analysis, including a case study of U.S. carbon capture and storage (CCS) policy, provides intuition for our results and highlights how, even in a relatively simple analytical model, additionality is not equivalent to changes in welfare. A more detailed set of results using a computational general equilibrium model with many disaggregated production and consumption goods could generate additional guidance to policymakers regarding specific offset policies. Nonetheless, our model provides the theoretical foundation for future empirical and numerical studies on the aggregate effects of offset policies in general equilibrium.


\bibliography{offsets.bib}
\bibliographystyle{aer}

\vspace{1cm}
\onehalfspacing
\newpage
\appendix 

\noindent \textbf{{\Large APPENDIX}}


\section{Appendix: Log-Linear Equations} \label{App:LogLinear}

\renewcommand{\theequation}{A-\arabic{equation}}
\setcounter{equation}{0}

This appendix derives the log-linear equations that appear in Section \ref{Sec:Model}.

To begin, totally differentiating the primary-factor resource constraint finds:
\begin{align*}
    d[\bar K &= K_X+K_R+K_F]
    \Rightarrow 0 = dK_X+dK_R+dK_F \\
    \Leftrightarrow 0 &= \frac{K_X}{\bar K}\frac{dK_X}{K_X}+\frac{K_R}{\bar K}\frac{dK_R}{K_R}+\frac{K_F}{\bar K}\frac{dK_F}{K_F}
    \Leftrightarrow 0 = \alpha_X \hat K_X+ \alpha_R\hat K_R+\alpha_F\hat K_F, 
\end{align*}
where $\alpha_i$ is the share of primary factors used in Sector $i$. Note that all parameter definitions are found in Section \ref{Sec:Model}.

We continue with Sector $X$. Totally differentiate the production function and simplify as follows:
\begin{align*}
    d[X&=X(K_X,E)]
    \Rightarrow dX = \frac{\partial X}{\partial K_X}dK_X+\frac{\partial X}{\partial E}dE \\
    \Leftrightarrow dX &= \frac{(1+t)P_K}{P_X}dK_X+\frac{P_E}{P_X}dE \ [\textrm{by profit-max FOCs}]
    \Leftrightarrow \frac{dX}{X} = \frac{(1+t)P_KK_X}{P_XX}\frac{dK_X}{K_X}+\frac{P_EE}{P_XX}\frac{dE}{E} \\
    \Leftrightarrow \hat X &= \frac{(1+t)P_KK_X}{P_XX}\hat K_X+\frac{P_EE}{P_XX}\hat{E} 
    \Leftrightarrow \hat X = \theta_{XK}\hat K_X+\theta_{XE}\hat{E},
\end{align*}
and $\theta_{XK}+\theta_{XE}=1$, where $\theta_{XK} \equiv \frac{(1+t)P_K K_X}{P_X X}$ and $\theta_{XE} \equiv \frac{P_E E}{P_X X}$.
Next, totally differentiating the zero-profit condition for Sector $X$ finds:
\begin{align*}
    d[P_X X &= (1+t) P_K K_X + P_E E] \\
    \Rightarrow dP_X X + P_XdX &= dtP_K K_X + (1+t)dP_K K_X + (1+t)P_K dK_X + dP_E E+ P_E dE \\
    \Leftrightarrow dP_X X + P_XdX &= (1+t)P_K K_X \left( \frac{dt}{1+t} + \frac{dP_K}{P_K} +  \frac{dK_X}{K_X} \right) + dP_E E+ P_E dE \\
    \Leftrightarrow \hat P_X + \hat X &= \frac{(1+t)P_K K_X}{P_XX} (\hat t+ \hat P_K + \hat K_X) + \frac{P_E E}{P_XX} ( \hat P_E+ \hat E) \\    
    \Leftrightarrow \hat P_X + \hat X &= \theta_{XK} (\hat t + \hat P_K + \hat K_X) + \theta_{XE} ( \hat P_E+ \hat E),
\end{align*}
where $\hat t = dt/(1+t)$.
Finally, we define the elasticity of substitution in sector $X$ as:
\begin{align*}
    \sigma_X &\equiv -\frac{d(K_X/E)/(K_X/E)}{d((1+t)P_K/P_E)/((1+t)P_K/P_E)} \geq 0.
\end{align*}
Then, differentiating through the definition finds:
\begin{align*}
    \Rightarrow \sigma_X &= -\frac{((dK_XE-K_XdE)/E^2)\cdot(E/K_X)}{((dtP_KP_E+(1+t)dP_KP_E-(1+t)P_KdP_E)/P^2_E)\cdot(P_E/((1+t)P_K))} \\
    \Leftrightarrow \sigma_X &= -\frac{\frac{dK_X}{K_X}-\frac{dE}{E}}{\frac{dt}{1+t}+\frac{dP_K}{P_K}-\frac{dP_E}{P_E}} 
    \Leftrightarrow \sigma_X = -\frac{\hat K_X-\hat E}{\hat t + \hat P_K - \hat P_E} 
    \Leftrightarrow \hat K_X-\hat E = \sigma_X (\hat P_E - \hat t - \hat P_K).
\end{align*}

Similar derivations to those in sector $X$ apply to sectors $E$ and $F$, so those derivation are omitted here. The only differences are that sector $E$ does not employ ``capital'' directly (and thus its inputs are not subject to the tax) and sector $F$ has only one input (and thus there is no substitution in production). However, since sector $R$ has a fixed factor and its output receives the offset payment, then we provide the full set of derivations below. To start, totally differentiate the production function and simply as follows:
\begin{align*}
     d[R&=R(K_R,\bar{Q})] \Rightarrow dR = \frac{\partial R}{\partial K_R} dK_R \ [\textrm{since}\ d\bar{Q}=0] \\
     d R &= \frac{(1+t)P_K}{P_R+s} dK_R \ [\textrm{by profit-max FOCs}] \Leftrightarrow \frac{dR}{R} =\frac{(1+t)P_K K_R}{(P_R+s)R} \frac{dK_R}{K_R} \Leftrightarrow \hat R = \theta_{RK} \hat K_R,
\end{align*}
and $\theta_{RK}+\theta_{RQ}=1$, where $\theta_{RK} \equiv \frac{(1+t)P_K K_R}{(P_R+s) R}$ and $\theta_{RQ} \equiv \frac{P_{Q} \bar Q}{(P_R+s) R}$. For the zero-profits condition, the derivation goes:
\begin{align*}
    d[(P_R+s)R &= (1+t) P_K K_R + P_{Q}\bar{Q}] \\
    \Rightarrow (dP_R+ds)R+(P_R+s)dR &= dtP_K K_R + (1+t)dP_K K_R + (1+t)P_K dK_R + dP_{Q}\bar{Q} \\
    \Rightarrow (P_R+s)R \left( \frac{dP_R+ds}{P_R+s}+\frac{dR}{R} \right)  &= (1+t)P_K K_R\left( \frac{dt}{1+t}+ \frac{dP_K}{P_K} + \frac{dK_R}{K_R} \right) + P_{Q}\bar{Q} \frac{dP_{Q}}{P_{Q}} \\
    \Rightarrow (P_R+s)R \left( \frac{P_R}{P_R+s}\frac{dP_R}{{R}} + \frac{s}{P_R+s}\frac{ds}{s} +\frac{dR}{R} \right)  &= (1+t)P_K K_R\left( \frac{dt}{1+t}+ \frac{dP_K}{P_K} + \frac{dK_R}{K_R} \right) + P_{Q}\bar{Q} \frac{dP_{Q}}{P_{Q}} \\
    \Rightarrow \frac{P_R}{P_R+s} \hat P_R + \frac{s}{P_R+s} \hat s + \hat R &= \frac{(1+t)P_K K_R}{(P_R+s) R} (\hat t+ \hat P_K + \hat K_R ) + \frac{P_{Q}\bar{Q}}{(P_R+s) R} \hat{P}_{Q} \\
    \Rightarrow (1-\gamma)\hat P_R + \gamma\hat s + \hat R &= \theta_{RK} (\hat t + \hat P_K + \hat K_R) + \theta_{RQ} \hat{P}_{Q},
\end{align*}
where $\hat s = ds/s$ is the change in the per unit offset payment and $\gamma \equiv \frac{s}{P_R+s}$ is the share of the initial offset in the total price of $R$. Despite the fixed factor, firms in Sector $R$ want to substitute according to the equation:
\begin{align*}
    \sigma_R &\equiv -\frac{d(K_R/\bar{Q})/(K_R/\bar{Q})}{d((1+t)P_K/P_Q)/((1+t)P_K/P_Q)} \geq 0.
\end{align*}
Applying the differentiation through the definition, with $\bar{Q}$ fixed, then finds:
\begin{align*}
    \hat K_R &= \sigma_R (\hat P_{Q} - \hat{t} - \hat P_K). 
\end{align*}

Totally differentiating the lump-sum rebate $(B)$ and apply government's budget constraint finds:
\begin{align*}
    d[B&=t(P_K\bar{K})-sR] \Rightarrow
    dB =dtP_K\bar{K}+tdP_K\bar{K}+tP_Kd\bar{K}-dsR-sdR \\
    \Leftrightarrow \frac{dB}{B} &= \frac{tP_K\bar{K}}{B} \left( \frac{1+t}{t} \frac{dt}{1+t}+\frac{dP_K}{P_K} \right) - \frac{sR}{B} \left( \frac{ds}{s}+\frac{dR}{R} \right) \\
    \Leftrightarrow \hat B &= \frac{tP_K\bar{K}}{B} \left( \frac{1+t}{t} \hat t + \hat P_K \right) - \frac{sR}{B} \left( \hat s + \hat R \right) 
    \Leftrightarrow \hat B = \kappa_t \left( \psi \hat t + \hat P_K \right) - \kappa_s \left( \hat s + \hat R \right).
\end{align*}
If the government spends all the tax revenue on the offset payment, then $B=0$ and $\kappa_t=\kappa_s$, and
\begin{align*}
    0 &= \psi \hat t + \hat P_K - \hat s - \hat R \Leftrightarrow
    \hat s + \hat R  = \psi \hat t + \hat P_K, 
\end{align*}
where $\psi = \frac{1+t}{t}$ and this parameter links the \textit{ad valorem} capital tax to the per unit offset payment.

We conclude this appendix by deriving the change in total emission in log-linear form as follows:
\begin{align*}
    d[Z &= \xi_F F + \xi_X K_X] \Rightarrow dZ = \xi_F dF + \xi_X dK_X \Leftrightarrow
    \frac{dZ}{Z} = \frac{\xi_FF}{Z} \frac{dF}F + \frac{\xi_X K_X}{Z} \frac{dK_X}{K_X} \\
    \Leftrightarrow \hat Z &= \frac{\xi_FF}{Z} \hat F + \frac{\xi_X K_X}{Z} \hat K_X 
    \Leftrightarrow \hat Z = \rho_F \hat F + \rho_X \hat K_X,
\end{align*}
where $\rho_F+\rho_X=1$. This last equation assists in the additionality and welfare calculations, but is not part of the log-linear general equilibrium system.


\section{Appendix: Decomposing Sector \textit{R}}
\label{App:Non-Additional}
\renewcommand{\theequation}{E-\arabic{equation}}
\setcounter{equation}{0}

This section outlines the extended model described in Section~\ref{Sec:Model}. Here we start with the premise that the initial equilibrium contains non-additional $R$. That is, some proportion of $R$ is not due to the existing offset payments (in other words, it is non-additional by traditional measures). The total quantity of $R$ can then be decomposed as $R=R^A+R^N$ with $R^A$ denoting the additional component of $R$, while $R^N$ denotes the non-additional component. Totally differentiating this sum leads to:
\begin{align}
    \hat{R} = \phi \hat{R}^A + (1-\phi) \hat{R}^N
    \label{Eq:DecompR}
\end{align}
where $\phi \equiv \frac{R^A}{R}$, or the initial share of $R$ that is additional.

All additional and non-additional offsets in $R$ receive the offset payment and thus the zero-profits condition is $(P_R+s)R=P_{RA}R^A+P_{RN}R^N$. Totally differentiating this condition finds:
\begin{align}
    (1-\gamma)\hat{P}_R + \gamma\hat{s} + \hat{R} = \theta_{RA}(\hat{P}_{RA}+\hat{R}^A) + \theta_{RN}(\hat{P}_{RN}+\hat{R}^N)
\end{align}
where $\theta_{RA}\equiv\frac{P_{RA}R^A}{P_RR}$ and $\theta_{RN}\equiv\frac{P_{RN}R^N}{P_RR}$ are the shares of additional and non-additional offset input costs in sector $R$, respectively, and $\theta_{RA}+\theta_{RN}=1$. 

The production function for additional offsets employs capital and contains a fixed factor, and thus is given by $R^A=R^A(K_A,\bar{Q}_A)$. Totally differentiating recovers:
\begin{align}
    \hat{R}^A = \theta_{AK}\hat{K}_A
\end{align}
since $\bar{Q}_A$ is fixed and $\theta_{AK}\equiv\frac{(1+t)P_K K_A}{P_{RA}R^A}$ is the input cost share for capital in the additional sub-sector.\footnote{We assume that $\bar{Q}_A$ and $\bar{Q}_N$ are fixed; this is justified by the fact that the existing physical capital (e.g., wind turbines or solar panels) is installed on these pieces of land and cannot easily be moved across sectors.} Also, this implies decreasing marginal returns to $K_A$. The sub-sector $R^A$ has a zero-profits conditions too given by $P_{RA}R^A=(1+t)P_K K_A + P_{QA}\bar{Q}_A$, and totally differentiating it returns:
\begin{align}
    \hat{P}_{RA} + \hat{R}^A = \theta_{AK} (\hat t + \hat P_K + \hat K_A) + \theta_{AQ} \hat{P}_{QA}
\end{align}
where $\hat{P}_{QA}$ is the fixed factors price in the additional offset sub-sector and $\theta_{AQ}\equiv\frac{P_{QA}\bar{Q}^A}{P_RR}$ is the share of input costs to the fixed factor. Finally, despite the fixed factor, substitution between the two inputs in sub-sector $R^A$ is represented by $\sigma_R$, and differentiating the definition finds:
\begin{align}
    \hat{K}_A &= \sigma_R \left( \hat{P}_{QA} - \hat{t} - \hat{P}_K \right)
\end{align}

Similarly, define production, and the zero profits, and substitution in sub-sector $R^N$, and differentiate to find:
\begin{align}
    \hat{R}^N &= \theta_{NK}\hat{K}_N \\
    \hat{P}_{RN} + \hat{R}^N &= \theta_{NK} (\hat t + \hat P_K + \hat K_N) + \theta_{NQ} \hat{P}_{QN} \\
    \hat{K}_N &= \sigma_R \left( \hat{P}_{QN} - \hat{t} - \hat{P}_K \right)
\end{align}
where the substitution elasticity $(\sigma_R)$ is the same across sub-sectors.

Capital is employed in four production functions now, so define $K_R=K_A+K_N$, and differentiating yields:
\begin{align}
    \hat{K}_R=\lambda_A \hat{K}_A + \lambda_N \hat{K}_N
\end{align}
where $\lambda_A\equiv\frac{K_A}{K_R}$ and $\lambda_N\equiv\frac{K_N}{K_R}$ with $\lambda_A+\lambda_N=1$, and it still holds that $0 =\alpha_X \hat K_X + \alpha_R \hat K_R + \alpha_F \hat K_F$.

We assume the additional and non-additional sub-sectors have the same cost functions but allow each sub-sector to have a different endowment of the fixed resource. Regardless, perfect substitution $(R=R^A+R^N)$ and positive initial quantities for $R^A$ and $R^N$ means $P_{RA}=P_{RN}=P_R$. These assumptions also imply the input cost shares are identical such that $\theta_{AK} = \theta_{NK} = \theta_{RK}$ and $\theta_{AQ} = \theta_{NQ} = \theta_{RQ}$. In addition, $\phi=\theta_{RA}$ and $(1-\phi)=\theta_{RN}$, and further algebra reveals $\phi=\lambda_A$ and $(1-\phi)=\lambda_N$, too.

As a result, when the additional and non-additional sub-sectors are identical in all economic respects, it must be that the sub-sectors scale perfectly with $R$; that is, $\hat{R}=\hat{R}^A=\hat{R}^N$. In other words, both additional and non-additional offsets change in equal proportion when the offset price changes. The Supplemental Appendix analytically confirms this result. Note that the condition $\hat{R}=\hat{R}^A=\hat{R}^N$ satisfies the general case of $\hat{R} = \phi \hat{R}^A + (1-\phi) \hat{R}^N$.) 


\section{Appendix: Welfare Deviation} \label{App:Welfare}

\renewcommand{\theequation}{B-\arabic{equation}}
\setcounter{equation}{0}

This appendix derives the welfare formula in Section \ref{Sec:Welfare}. To begin, totally differentiate the utility function:
\begin{align*}
    dU = \frac{\partial U}{\partial X} dX + n \frac{\partial U}{\partial Z}dZ.
\end{align*}
Next, substitute the first-order conditions (FOCs) from the representative consumer's utility maximization problem, $\frac{\partial U}{\partial X}=\lambda P_X$, where $\lambda$ is the marginal utility of income (i.e., the Lagrange multiplier on the budget constraint) to get:
\begin{align*}
    dU = \lambda P_X dX + n \frac{\partial U}{\partial X}dZ.    
\end{align*}
Continuing, divide through by $\lambda$ and multiply the terms on the right-hand side by appropriate values of one to yield:
\begin{align*}
    \frac{dU}{\lambda} = P_X X \hat X - \mu (nZ) \hat Z ,
\end{align*}
where $\mu \equiv \frac{-1}{\lambda}\frac{\partial U}{\partial Z}>0$ is the marginal environmental damage (MED) from a change in $Z$. Finally, divide through total income $I$ to find equation \ref{Eq:Welfare} noting $P_XX=I$ at the optimum.


\newpage
\section{Appendix: Additional Figures}\label{App:Figures}
\begin{figure}[h!]
    \centering
    \includegraphics[height=0.25\textheight, width=\textwidth, keepaspectratio]{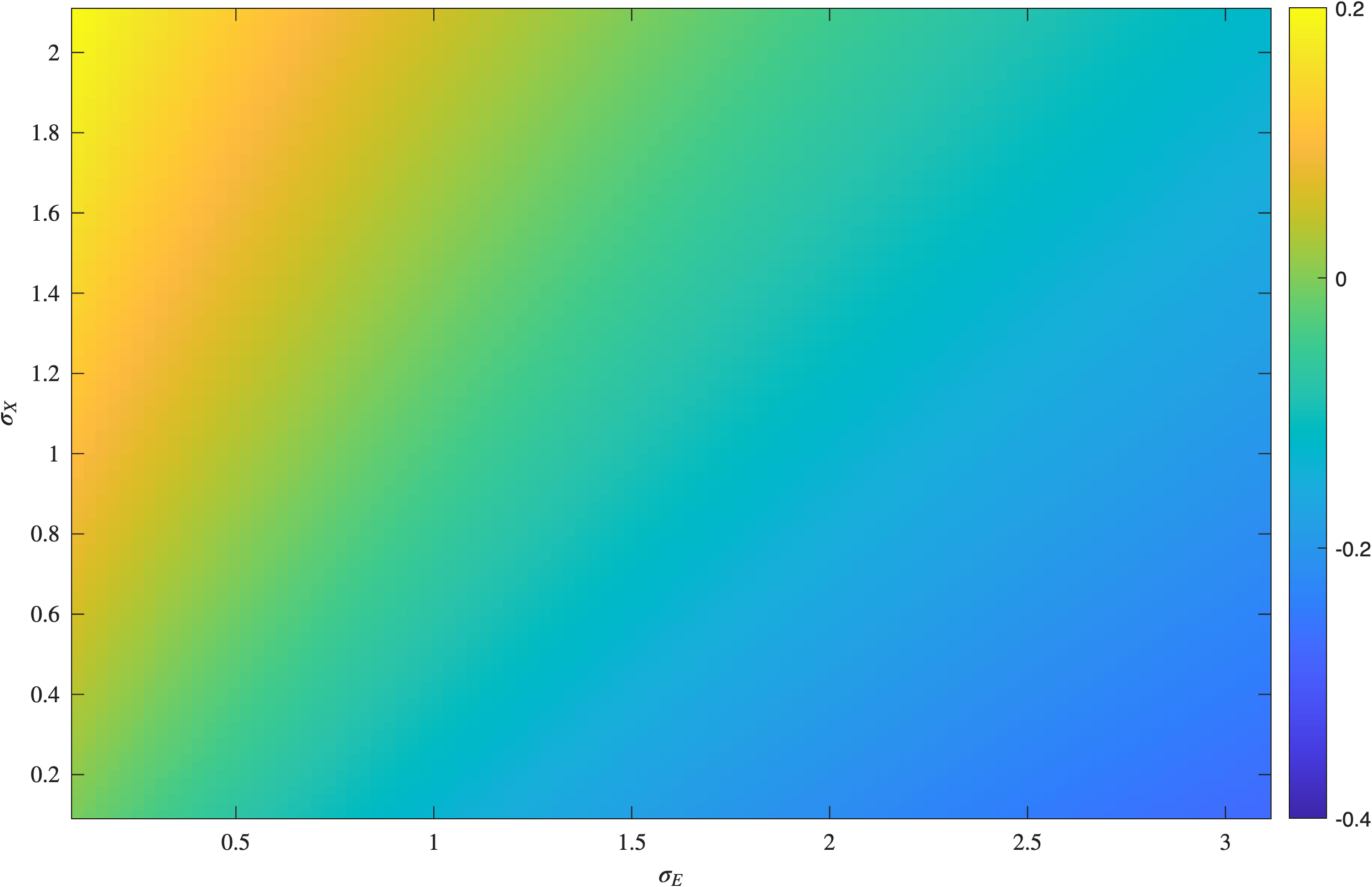}
    \caption{Change in emissions, $\hat{Z}$, as $\sigma_E$ and $\sigma_X$ vary.}
    \vspace{3mm}
    \includegraphics[height=0.25\textheight, width=\textwidth, keepaspectratio]{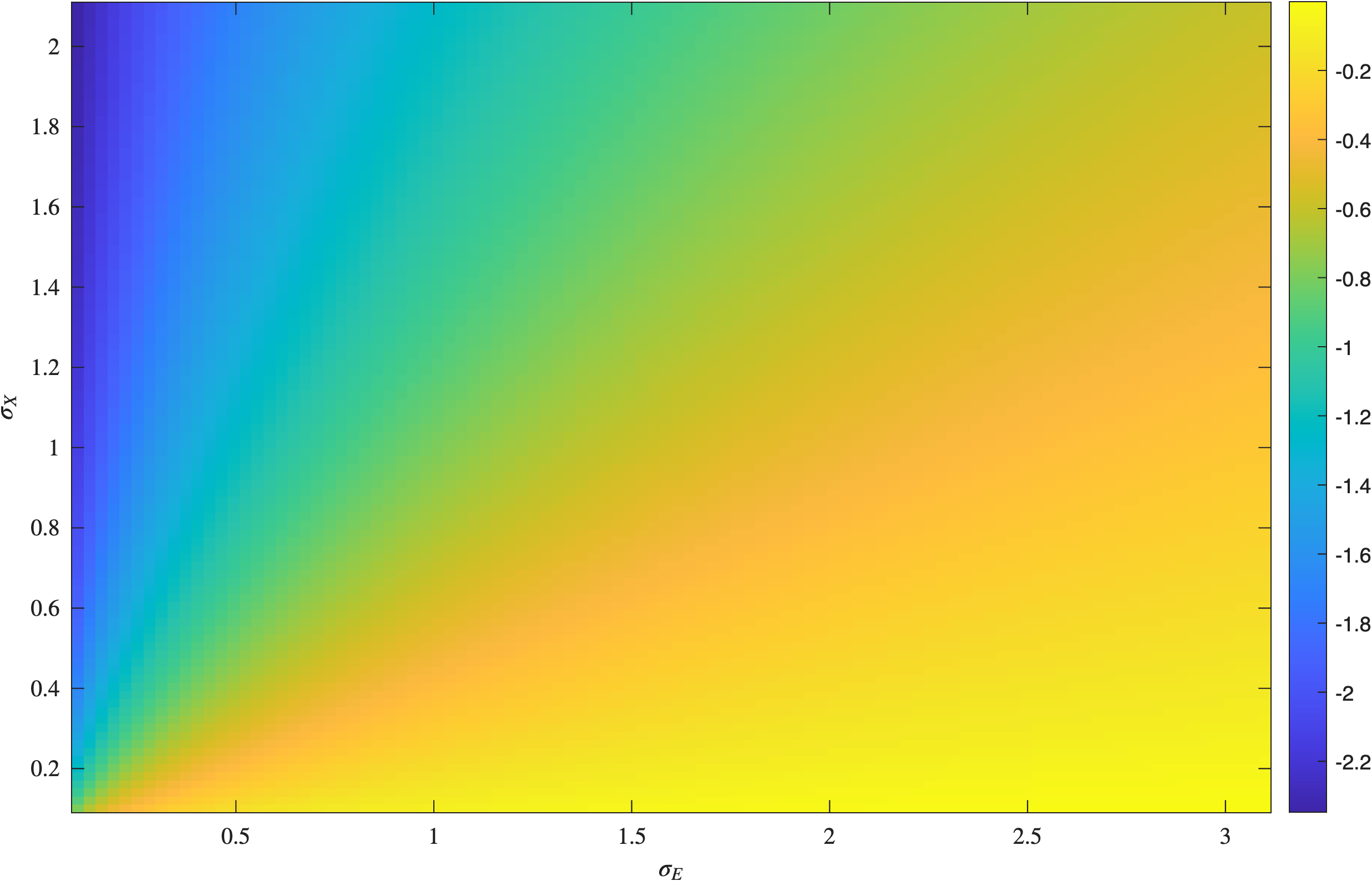}
    \caption{Change in $\Delta$ as $\sigma_E$ and $\sigma_X$ vary.}
    \vspace{3mm}
    \includegraphics[height=0.25\textheight, width=\textwidth, keepaspectratio]{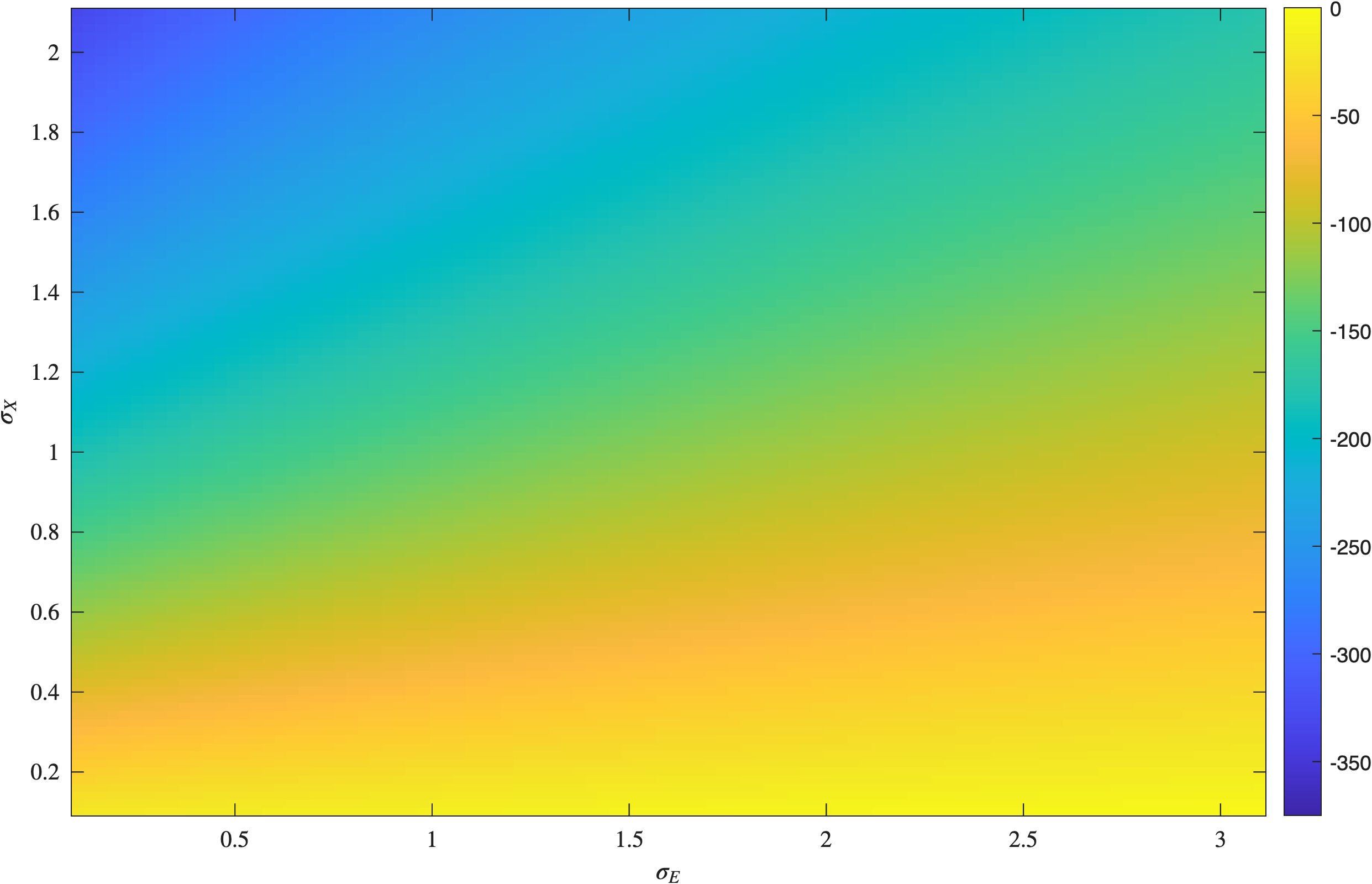}
    \caption{Change in welfare (billions of dollars) as $\sigma_E$ and $\sigma_X$ vary.}
\end{figure}

\end{document}